\documentclass{article}
\usepackage{graphicx} 

\usepackage{amsmath}
\usepackage{amssymb} 
\usepackage{comment}
\usepackage{verbatim}
\usepackage{caption}
\usepackage{subcaption}

\usepackage{tikz}

\usepackage{authblk}
\usepackage{amsthm}
\usepackage{verbatim}
\usepackage{booktabs}

\newtheorem{prop}{Proposition}
\newtheorem{lemma}{Lemma}
\newtheorem{Corollary}{Corollary}
\newtheorem{thrm}{Theorem}
\newcommand{\Xomit}[1]{}

\newtheorem{example}{Example}

\usepackage{dsfont}

\usepackage[
natbibapa
]{apacite} 
\usepackage[hyphens]{url}
\begin{document}
\title{Prophet Inequalities via Linear Programming}
\author[1]{Halil I. Bayrak}
\author[2]{Mustafa \c{C}. P{\i}nar}
\author[3]{Rakesh Vohra}
\affil[1]{Technical University of Munich}
\affil[2]{Bilkent University}
\affil[3]{University of Pennsylvania}


\date{February 2026}

\maketitle

\section*{Abstract}
Prophet inequalities bound the expected reward that can be obtained in a stopping problem by the optimal reward of its corresponding off-line version. We propose a systematic technique for deriving prophet inequalities for stopping problems associated with selecting a point in a polyhedron. It utilizes a reduced-form linear programming representation of the stopping problem. We illustrate the technique to derive a number of known results as well as some new ones. For instance, we prove a $\frac{1}{2}$-prophet inequality when the underlying polyhedron is an on-line polymatroid; one whose underlying submodular function depends upon the realized rewards. 
We also demonstrate a composition by the Minkowski sum property. If an $r-$ prophet inequality holds for polyhedra $P^1$ and $P^2$, it also holds for their Minkowski sum.

\newpage

\section{Introduction}

Optimal stopping problems are concerned with choosing a time to
take a given action based on sequentially observed random variables so as to maximize
an expected payoff.  The action taken may be to reject a hypothesis, replace a
machine, or hire a secretary. They can be solved using dynamic programming. 

The basic stopping problem is the archetypal on-line resource allocation problem. There is one unit of an indivisible resource to be allocated. Requests for the resource arrive sequentially, and associated with each request is a reward for honoring it. The challenge is to decide which request is to be honored {\em before} knowing what the rewards from future requests will be. It is straightforward to generalize this basic stopping problem to cases where a request needs multiple heterogeneous resources.

A prophet inequality lower bounds the ratio
of the optimal expected value of the on-line problem to the optimal expected value of its off-line counterpart. The off-line counter part is often called the prophet's problem, because the decision maker has knowledge of {\em all} rewards. The larger the bound, the less valuable perfect information about the future is. This ratio also represents the limit of what any approximation algorithm for the on-line problem can achieve (in a worst-case sense).

In the `prophet problem', introduced in \citet{krengel1977semiamarts}, one is shown $n$ non-negative rewards, sequentially, that are independent draws from known distributions (not necessarily identical). After observing each reward, one can reject it and see the next reward in the sequence, or accept it, in which case the process stops. Note, one cannot select any previously rejected reward. The goal is to make a selection to maximize the expected reward and compare it to the expected reward one can obtain with hindsight. The maximum expected reward in hindsight is the expected value of the $n^{th}$ order statistic. In \citet{krengel1977semiamarts} a tight bound of 1/2 is obtained. In other words, the optimal expected reward of the stopping problem is at least half the size of the expected value of the $n^{th}$ order statistic. 

We consider an extension that can be interpreted as
selecting a point in a polyhedron where the value of each coordinate must be chosen sequentially. The sequence of coordinates to be chosen can be interpreted as a sequence of service requests, with the value of the coordinate being a service level. Associated with the $i^{th}$ coordinate or service request is a random number,  $r_i$, to be interpreted as the marginal value of honoring the $i^{th}$ request. Each service request may consume a vector of resources, and the underlying polyhedron characterizes which combinations of resources are feasible to consume. We allow for a `fractional' selection of each reward, which presumes that the resources being consumed are divisible. A fractional selection constrained to lie in $[0,1]$ can be interpreted as a probability of selection.  

The problem bears obvious similarities to the on-line linear programming problem studied, for example, in \citet{AgrawalWangYe2014}. The difference is that in our setting, the feasible region, as defined by a collection of inequalities, say, $Ax \leq b$, is known in advance, but not the objective function coefficients of the variables. In the on-line linear programming problem, the constraint matrix $A$ is revealed one column at a time. Furthermore, the bounds derived in the on-line linear programming case are asymptotic in nature, i.e., depend on how the size of the problem grows.

We describe one setting where the assumption that the feasible region is known in advance, but not the rewards, is realistic: computational sprinting. Recent chip micro-architecture developments make it possible
to expedite processes running in the processor. The technique is called sprinting, and it is
a class of mechanisms that provides a short but significant performance boost while temporarily
exceeding the thermal design point of the processors. \citet{HuangJoaoRicoHiltonLee2019} propose an algorithm that manages sprints by dynamically predicting utility and modeling thermal
headroom. The authors also compare their set of algorithms experimentally against
an {\em oracular policy}, which matches the notion of a prophet in our setting.

Each epoch consists of a large number of instructions and can be categorized accordingly.
When the computation reaches an epoch, the algorithm can predict the
gains from sprinting accurately. The algorithm has to decide the extent of increasing the temperature
beyond the thermal design point in each epoch. Apart from the thermal headroom available
in each period, there is a limit on the number of periods that the processor can perform
over the thermal design point. The inter-temporal limitation on heat increase can be formulated as
a knapsack constraint. An allocation for the fractional knapsack constraint serves as
a guide to pace sprints to maximize long-run performance under thermal constraints. In reality, limits on heat increases are not the only constraints. There are limitations on voltage and power as well.

In this paper, we use a reduced-form linear programming representation to obtain prophet inequalities for this class of prophet problems. Such reduced form representations have been useful in stochastic scheduling, mechanism, and information design (see \citet{Bhattacharya1992ExtendedPP}, \citet{bertnino} \citet{che2013} and \citet{vtv}). As demonstrated here, the approach gives a simple technique for deriving prophet inequalities, particularly when contrasted with the leading alternative called the on-line contention resolution scheme (see \citet{ocrs}). 

Linear programming has been used to attack stopping problems before. \citet{dk} is an early example. The approach described here (motivated by \citet{epitropou2019optimal}) is different in that the distinction between the on-line and off-line problem is in the feasible region and {\em not} the decision variables. This allows one to take the optimal solution to the off-line problem and rescale it into a feasible solution for the on-line problem. We defer a discussion of the precise difference to the end of Section \ref{section: model}.

Our approach yields, as we illustrate, a {\em systematic} way to derive known prophet inequalities for a variety of different stopping problems. Specifically,  
\begin{itemize}
\item[(i)] A $\frac{1}{K+1}$-prophet inequality when the polyhedron has $K$ many constraints and rewards are drawn independently.
	\item[(ii)] A $\frac{1}{2}$-prophet inequality under a polymatroid constraint and independently drawn rewards.
	\item[(iii)] $\frac{1}{n}$-prophet inequality when the $n$ fractional selections are constrained to lie in a polyhedron and the rewards are not necessarily independent.
\end{itemize}

We also provide some new results. The first is a composition by the Minkowski sum property. If an $r-$ prophet inequality holds for polyhedra $P^1$ and $P^2$, it also holds for their Minkowski sum. Second, we derive a $\frac{1}{2}$-prophet inequality for the case where the underlying polyhedron is a polymatroid that {\em depends} on the realized rewards. In other words, the available capacity of each resource is a function of realized rewards. High realized rewards (e.g., prices or willingness to pay) may justify turning on extra servers,
bringing peaker plants on-line, or expanding capacity. More generally, systems where supply replenishment depends on realized rewards would fit. 

Relevant prior work is summarized next, the model and notation in the subsequent section followed by the applications.

\textit{Literature Review.} The literature on prophet inequalities is vast. The interested reader can consult \citet{hill1992survey} for a classic survey. \citet{lucier2017economic} provides a more recent survey focusing on the connections to mechanism design. More relevant for us are
\citet{alaei2014bayesian}, \citet{kleinberg2012matroid} and \citet{dutting2015polymatroid}. The first allows for a selection from a uniform matroid, the second from arbitrary matroids, and the last, which subsumes the previous two, allows selections from polymatroids.
Our method replicates their result, $\frac{1}{2}-$prophet inequality for polymatroids, via simple linear programming tools and extends it to on-line polymatroids.\footnote{Our result is a little stronger in that we do not require the associated submodular function to be integer valued.}

There is a literature on the use of linear programming to solve stopping problems (see, for example, \citet{chostock}). However, the focus is on continuous-time settings and computational speed rather than on deriving prophet inequalities. In discrete time and discrete space settings, linear programming models were used for modeling optimal stopping to price American contingent claims in financial markets (see, e.g., \citet{camcimcp}).

\section{Model} \label{section: model}
We consider an extension of the basic stopping problem that can be interpreted as
selecting a point in a polyhedron where the value of each coordinate must be chosen sequentially. We refer to this as the {\bf on-line selection problem}.
The polyhedron is described by a collection of inequalities, e.g.,
knapsack or polymatroid. 

We let $n$ denote a natural number, $n \in \mathbb{N}$, which will be the total number of rewards. We use $[n]$ to denote the set $\{1,2,\ldots,n\}$. Associated with the $i^{th}$ coordinate or service request is a random number,  $r_i$, to be interpreted as the marginal value of honoring the $i^{th}$ request. Each $r_i$ is drawn from a known distribution $F_i$ (with density $f_i$) over a finite set $R_i$. The provision of services requires a bundle of resources. To provide one unit of service to the $i^{th}$ request requires $a_{ki}$ units of the $k^{th}$ resource, and $b_k$ is the available supply of resource $k \in [K]$, where $K$ is the total number of resources/constraints.

We use bold text when denoting vectors, thus, $\mathbf{r}$ denotes an $n$-vector of rewards $(r_1,r_2,\ldots,r_n)\in \times_{\ell \in [n]} R_\ell$. The truncated vector, $(r_1,r_2,\ldots,r_i) \in \times_{\ell \in [i]} R_\ell$ for any $i \in [n]$ will is denoted $\mathbf{r}^i$. We sometimes write $\mathbf{r}$ as $(r_j,\mathbf{r}_{-j})$, where $\mathbf{r}_{-j}$ is the vector of rewards obtained from $\mathbf{r}$ by removing reward $r_j$.

 We first define the variables for the on-line selection problem. Let $q_j(\mathbf r^j)$ be the service  level selected for request $j \in [n]$, given that the profile of rewards $\mathbf r^j {\in \times_{\ell \in [j]} R_\ell}$ was realized. If we require $q_j(\mathbf r^j) \in [0,1]$ for all $j$ and all profiles $\mathbf r^j$, we can interpret it as a probability. Thus, we are allowing for randomized selection. 

As in the mechanism design literature, we refer to $\mathbf q$ as an {\bf ex-post allocation} variable. For all $j \in [n]$, let
\[{{\mathbf{q}}}(\mathbf r^i) = [q_1(r_1), q_2(\mathbf{r}^2), \ldots, q_j(\mathbf r^j)]. \]
To describe the set of feasible ex-post allocations, let $A$ be the non-negative $K \times n$ matrix whose $(k,j)^{th}$ entry is $a_{kj}$. Let $A^j$ denote the submatrix of $A$ consisting of the first $j$ columns of $A$, and denote the $j^{th}$ column of $A$ by $a_j$.
Then, ${{\mathbf{q}}}$ is feasible if for all $j \in [n]$,
\begin{align*}
	A^j{{\mathbf{q}}}(\mathbf r^j) \leq b \qquad \forall \mathbf r^j {\in \times_{\ell \in [j]} R_\ell}.
\end{align*}
It will sometimes be helpful to write this expression down completely, i.e., for all $j \in [n]$, for all $k \in [K]$ and all profiles $\mathbf r^j {\in \times_{\ell \in [j]} R_\ell}$:
$$\sum_{\ell \in [j]}a_{k\ell}q_\ell(\mathbf r^\ell) \leq b_k.$$

Now, we describe a more compact formulation using the variables $Q_j(r_j) $ defined as follows:
$$Q_j(r_j) = \mathbb{E}_{\mathbf r^{j-1}} [q_j(\mathbf r^j)].$$
The variables $Q_j(r_j) $ are sometimes called {\bf interim allocation} variables. 

An interim allocation $\mathbf Q^n$ is {\bf implementable}  up until round $i \in [n]$, if there exists an ex-post allocation ${{\mathbf{q}}}$ satisfying the following set of constraints $(FP^i[\mathbf Q^i])$:
\begin{align*}
	&A^i{{\mathbf{q}}}(\mathbf r^i) \leq b & \forall \mathbf r^i {\in \times_{\ell \in [i]} R_\ell},\\
	&\mathbb{E}_{\mathbf r^{j-1}} [q_j(\mathbf r^j)] \geq Q_j(r_j) & \forall r_j {\in R_j}, \; \forall j \in [i],  \\
	&{{\mathbf{q}}}(\mathbf r^i) \geq 0 & \forall \mathbf r^i {\in \times_{\ell \in [i]} R_\ell}.
\end{align*}
If the interim allocation $\mathbf Q^n$ is implementable up until round $i$, we will say that $FP^i[\mathbf Q^i] \neq \emptyset$. An interim allocation $\mathbf Q^n$ is implementable if $FP^n[\mathbf Q^n] \neq \emptyset$.

Consider the following optimization problem:
\begin{align}
	h_{i+1}(\mathbf Q^i) = \max_{z, \mathbf q}\; &\mathbb{E}_{\mathbf r^i} [z(\mathbf r^i)] \nonumber\\
	\text{s.t.}\; &A^i{{\mathbf{q}}}(\mathbf r^i) + a_{i+1}z(\mathbf r^i) \leq b &\forall \mathbf r^i {\in \times_{\ell \in [i]} R_\ell},  \label{cap}\\
	&\mathbb{E}_{\mathbf r^{j-1}}[q_j(\mathbf r^j)] \geq Q_j(r_j) &\forall r_j {\in R_j}, \; \forall j \in [i], \label{expec}\\ 
	&{{\mathbf{q}}}, z \geq 0. \label{nonnegative}
\end{align}

\begin{lemma}\label{lemma:imp}
	$\mathbf Q^n$ is implementable if and only if $Q_{i+1}(r_{i+1}) \leq h_{i+1}(\mathbf Q^i) $ for all rewards $r_{i+1} {\in R_{i+1}}$ and all $i \in [n-1]$.
\end{lemma}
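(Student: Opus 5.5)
The plan is to prove both directions directly from the definition of $FP^i[\mathbf Q^i]$. The sufficiency direction is an induction on the round $i$, in which an optimal (or merely good enough) solution of the program defining $h_{i+1}(\mathbf Q^i)$ is turned into an ex-post allocation for round $i+1$. Throughout I use that the $r_\ell$ are drawn independently. Then $\mathbb{E}_{\mathbf r^{i}}[\cdot]$ does not depend on the value of $r_{i+1}$, and a function of $\mathbf r^i$ alone can be used for every $r_{i+1}$ simultaneously. I also read $h_{i+1}(\mathbf Q^i)=-\infty$ when the program is infeasible, i.e. when $FP^i[\mathbf Q^i]=\emptyset$. Round $1$ is handled by the standing convention that $Q_1$ itself is implementable, i.e. $a_1Q_1(r_1)\le b$ for all $r_1$, so that $FP^1[\mathbf Q^1]\neq\emptyset$.

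\emph{Necessity.} Suppose $\mathbf q$ lies in $FP^n[\mathbf Q^n]$. Fix $i\in[n-1]$ and $r_{i+1}\in R_{i+1}$, and define $z(\mathbf r^i)=q_{i+1}(\mathbf r^i,r_{i+1})$, keeping $q_1,\dots,q_i$ unchanged. I check the three groups of constraints of the program for $h_{i+1}$ in turn.
\begin{itemize}
\item Constraint (\ref{cap}) is exactly the round-$(i+1)$ feasibility constraint $A^{i+1}\mathbf q(\mathbf r^{i+1})\le b$, evaluated at the profile $(\mathbf r^i,r_{i+1})$.
\item Constraint (\ref{expec}) holds because $\mathbf q\in FP^n[\mathbf Q^n]$.
\item Nonnegativity (\ref{nonnegative}) is inherited from $\mathbf q\ge 0$.
\end{itemize}
Hence $h_{i+1}(\mathbf Q^i)\ge \mathbb{E}_{\mathbf r^i}[q_{i+1}(\mathbf r^i,r_{i+1})]\ge Q_{i+1}(r_{i+1})$.

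\emph{Sufficiency.} I show by induction on $i$ that $FP^i[\mathbf Q^i]\neq\emptyset$; the base case $i=1$ is the convention above.

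For the inductive step, assume $Q_{i+1}(r_{i+1})\le h_{i+1}(\mathbf Q^i)$ for all $r_{i+1}$. Let $M=\max_{r_{i+1}\in R_{i+1}}Q_{i+1}(r_{i+1})$, which is finite because $R_{i+1}$ is finite.
\begin{itemize}
\item \emph{Choosing one solution.} Since $M\le h_{i+1}(\mathbf Q^i)$, I pick a single feasible pair $(\mathbf q^*,z^*)$ of the program with $\mathbb{E}_{\mathbf r^i}[z^*(\mathbf r^i)]\ge M$. If $h_{i+1}$ is finite, the LP optimum is attained, so an optimal pair works. If $h_{i+1}=+\infty$, a pair with large enough objective exists.
\item \emph{Defining round $i+1$.} If $M=0$, set $q_{i+1}\equiv 0$. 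Otherwise define
$$q_{i+1}(\mathbf r^i,r_{i+1})=z^*(\mathbf r^i)\,\frac{Q_{i+1}(r_{i+1})}{\mathbb{E}_{\mathbf r^i}[z^*(\mathbf r^i)]},$$
and keep $q_1,\dots,q_i$ equal to those in $\mathbf q^*$.
\item \emph{Checking the constraints.} The scaling factor is at most $1$ and $a_{i+1}\ge 0$, so $A^i\mathbf q^*(\mathbf r^i)+a_{i+1}q_{i+1}\le A^i\mathbf q^*(\mathbf r^i)+a_{i+1}z^*(\mathbf r^i)\le b$. By independence, $\mathbb{E}_{\mathbf r^i}[q_{i+1}(\mathbf r^{i+1})]=Q_{i+1}(r_{i+1})$ exactly. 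The expectation constraints for rounds $j\le i$ are inherited from (\ref{expec}), and nonnegativity is clear.
\end{itemize}
Thus $FP^{i+1}[\mathbf Q^{i+1}]\neq\emptyset$, and at $i=n-1$ this gives implementability.

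The main obstacle I expect is the point handled in the first bullet: the hypothesis gives a possibly different witness $(\mathbf q,z)$ for each value $r_{i+1}$, whereas implementability needs a \emph{single} history-dependent $\mathbf q^i$ shared across all $r_{i+1}$. The resolution is that the program defining $h_{i+1}$ contains no $r_{i+1}$ at all. One pair $(\mathbf q^*,z^*)$, rescaled by $Q_{i+1}(r_{i+1})/\mathbb{E}[z^*]$, therefore serves every $r_{i+1}$ at once. This rescaling uses both the nonnegativity of $A$ and the independence of the rewards, which I will state explicitly.
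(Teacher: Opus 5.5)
Your proof is correct, and it takes a more direct route than the paper's. The paper proves sufficiency by contradiction through LP duality. It takes an optimal $(\mathbf q,z)$ for $h_{i+1}(\mathbf Q^i)$ together with optimal multipliers $y_{\mathbf r^i}$ for (\ref{cap}). It aggregates the round-$(i+1)$ capacity constraints with these weights to bound $q_{i+1}(\mathbf r^{i+1})$ by $(y_{\mathbf r^i}b-y_{\mathbf r^i}A^i\mathbf q(\mathbf r^i))/(y_{\mathbf r^i}a_{i+1})$. Complementary slackness then identifies the expectation of this bound with $\mathbb{E}_{\mathbf r^i}[z(\mathbf r^i)]=h_{i+1}(\mathbf Q^i)$, so infeasibility of the extension would force $Q_{i+1}(r_{i+1})>h_{i+1}(\mathbf Q^i)$ for some $r_{i+1}$.

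You skip duality entirely. You observe that the optimal $z^*$, which depends on $\mathbf r^i$ only, can serve as the round-$(i+1)$ allocation for every $r_{i+1}$ at once. After complementary slackness the paper's bound is exactly this $z^*$, so the dual step adds nothing essential.

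Your version is also more careful in three ways:
\begin{itemize}
\item It covers $h_{i+1}(\mathbf Q^i)=+\infty$. There the primal is unbounded and no optimal dual exists, so the paper's appeal to an optimal dual with $y_{\mathbf r^i}a_{i+1}>0$ is unavailable.
\item It makes explicit where independence and $A\ge 0$ are used.
\item It writes out the necessity direction, which the paper calls clear.
\end{itemize}

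Two simplifications are available. First, $FP^{i+1}[\mathbf Q^{i+1}]$ only requires $\mathbb{E}_{\mathbf r^i}[q_{i+1}(\mathbf r^{i+1})]\ge Q_{i+1}(r_{i+1})$. So you can set $q_{i+1}(\mathbf r^{i+1})=z^*(\mathbf r^i)$ with no rescaling. This also removes the tacit assumption $Q_{i+1}\ge 0$, which your rescaled allocation needs in order to be nonnegative. Second, under your $-\infty$ convention the inductive hypothesis is never actually used. The condition at $i=n-1$ alone makes the program for $h_n(\mathbf Q^{n-1})$ feasible, and its solution, extended by $z^*$, already lies in $FP^n[\mathbf Q^n]$.
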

\begin{proof}
	If $\mathbf Q^n$ is implementable, the statement is clearly true. So, suppose to the contrary that $Q_{i+1}(r_{i+1}) \leq h_{i+1}(\mathbf Q^i)$ for all rewards $r_{i+1} {\in R_{i+1}}$ and all $i \in [n-1]$, but $\mathbf Q^n$ is not implementable, $i.e.$, $FP^{n}[\mathbf Q^n] = \emptyset$. For any $i \in [n]$, let $({{\mathbf{q}}},z)$ be an optimal solution that yields $h_{i+1}(\mathbf Q^i)$ and therefore satisfies (\ref{cap}-\ref{nonnegative}). Notice, ${{\mathbf{q}}}$ is a feasible ex-post allocation in $FP^i[\mathbf Q^i]$. Then, there does not exist any $q_{i+1}$ that solves the following system of inequalities together with ${{\mathbf{q}}}$:
	\begin{align}
		&A_i{{\mathbf{q}}}(\mathbf r^i) + a^{i+1}q_{i+1}(\mathbf r^{i+1}) \leq b &\forall \mathbf r^{i+1} {\in \times_{\ell \in [i+1]} R_\ell}, \label{upper}\\
		&\mathbb{E}_{\mathbf r^{i}}[q_{i+1}(\mathbf r^{i+1})] \geq Q_{i+1}(r_{i+1}) &\forall r_{i+1} {\in R_{i+1}}, \label{lower}\\
		&{{\mathbf{q}}}, z \geq 0. \label{zero}
	\end{align}
	
	Let $y_{\mathbf r^i}$ be the optimal dual variables associated with (\ref{cap}). Then, for all $\mathbf r^{i+1} {\in \times_{\ell \in [i+1]} R_\ell}$:
	\begin{align*}
		y_{\mathbf r^i}A^i{{\mathbf{q}}}(\mathbf r^i) + y_{\mathbf r^i}a_{i+1}q_{i+1}(\mathbf r^{i+1}) &\leq y_{\mathbf r^i}b,\\
		q_{i+1}(\mathbf r^{i+1}) &\leq \frac{y_{\mathbf r^i}b - y_{\mathbf r^i}A_i{\vec{\mathbf{q}}}(\mathbf r^i) }{y_{\mathbf r^i}a_{i+1}}.
	\end{align*}
    {
    Note that since $z$ has a positive objective coefficient and appears only in the constraints with coefficient $a_{i+1} \geq 0$, for $h_{i+1}(\mathbf Q^i)$ to be bounded, there must exist an optimal dual solution with $y_{\mathbf r^i}a_{i+1} >0$. Hence, we have a well-defined
    }
    upper bound for each $q_{i+1}(\mathbf r^{i+1})$. Knowing that (\ref{upper}-\ref{zero}) is infeasible implies that even if we set each $q_{i+1}(\mathbf r^{i+1})$ at its upper bound, we must violate one of the constraints in (\ref{lower}). Hence, there is an $r_{i+1}$ such that:
	\begin{align*}
		Q_{i+1}(r_{i+1}) > \mathbb{E}_{\mathbf r^{i}}[ \frac{y_{\mathbf r^i}b - y_{\mathbf r^i}A_i{{\mathbf{q}}}(\mathbf r^i)}{y_{\mathbf r^i}a_{i+1}}]
		= \mathbb{E}_{\mathbf r^{i}}[z(\mathbf r^i)] = h_{i+1}(\mathbf Q^i).
	\end{align*}
	The penultimate equation follows by complementary slackness, and we obtain a contradiction.
\end{proof}

Now, we describe a linear programming formulation for the {\em off-line} selection problem. In the off-line version, the choice of the $i^{th}$ service level can be based on the {\em entire} profile of rewards. Denote by $w_{i}(\mathbf r^n)$, the ex-post service allocated to $i$ at reward profile $\mathbf r^n \in \times_{\ell \in [n]} R_\ell$. Let ${{\mathbf{w}}}(\mathbf r^n) = [w_1(\mathbf r^ n), w_2(\mathbf r^ n), \ldots, w_n(\mathbf r^n)].$
As before, we define the interim allocation as follows:
$$
W_i(r_i) = \mathbb{E}_{\mathbf r_{-i}}[w_i(r_i, \mathbf r_{-i})].
$$
The off-line selection problem (often called the prophet's problem) can be expressed as follows:
\begin{align*}
	\max_{\mathbf w, \mathbf W} \; &\sum_{j \in [n]} \mathbb{E}_{r_j}[r_jW_j(r_j)]\\
	\text{s.t.}\; &A^i{{\mathbf{w}}}(\mathbf r^n) \leq b & \forall \mathbf r^n {\in \times_{\ell \in [n]} R_\ell}, \; \forall i \in [n],\\
	&W_i(r_i) = \mathbb{E}_{\mathbf r_{-i}}[w_i(r_i, \mathbf r_{-i})] &\forall r_i {\in R_i}, \; \forall i \in [n],\\
	&{{\mathbf{w}}} \geq 0.
\end{align*}
Denote the optimal solution to the off-line selection problem by $({{\mathbf{w}}}^*, {\mathbf{W}}^*)$. 

If we prove that $h_{i+1}(0.5 {\mathbf{W}^i}^*) \geq 0.5W^*_{i+1}(r_{i+1})$ for all $r_{i+1} {\in R_{i+1}}$, then, from Lemma \ref{lemma:imp}, it follows that an on-line solution with a value of at least half the optimal off-line solution is implementable. In other words, we obtain a prophet inequality with a bound of $1/2$. To illustrate, we apply the idea to the case where $K=1$, $i.e.$, there is only one constraint.\footnote{\citet{epitropou2019optimal} considered this case with the additional restriction that $a_{1i}=1$ for all $i$.} To see a variation of the following result and its application to other settings, the interested reader can refer to \citet{epitropou2020essays}.

\begin{prop}\label{prop:oneConstrained}
    If $h_{i+1}(0.5 {\mathbf{W}^i}^*) \geq 0.5W^*_{i+1}(r_{i+1})$ for all $r_{i+1} {\in R_{i+1}}$ and all $i \in [n-1]$, then, the interim allocation $0.5 {\mathbf{W}}^*$ is implementable.
\end{prop}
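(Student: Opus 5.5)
This is an immediate application of Lemma \ref{lemma:imp} with $\mathbf Q^n := 0.5\,\mathbf W^*$. That lemma characterizes implementability of an interim allocation by the round-by-round inequalities $Q_{i+1}(r_{i+1}) \le h_{i+1}(\mathbf Q^i)$ for all $r_{i+1}\in R_{i+1}$ and all $i\in[n-1]$. So the plan is to verify that the hypothesis of the proposition is exactly this condition.

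First, the truncation of $0.5\,\mathbf W^*$ to its first $i$ coordinates is $0.5\,{\mathbf W^i}^*$. Hence $h_{i+1}(\mathbf Q^i) = h_{i+1}(0.5\,{\mathbf W^i}^*)$, and $Q_{i+1}(r_{i+1}) = 0.5\,W^*_{i+1}(r_{i+1})$. The assumed inequality $h_{i+1}(0.5 {\mathbf{W}^i}^*) \geq 0.5W^*_{i+1}(r_{i+1})$ for all $r_{i+1}$ and all $i\in[n-1]$ is therefore literally the condition in Lemma \ref{lemma:imp}. The lemma then gives $FP^n[0.5\,\mathbf W^*]\neq\emptyset$, i.e.\ $0.5\,\mathbf W^*$ is implementable.

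The only point needing care is the base round $i=1$ (equivalently, $j=1$), which the lemma's range $i\in[n-1]$ does not cover. Here we need $0.5\,W^*_1(r_1)$ to be achievable by some $q_1(r_1)$ with $a_1 q_1(r_1)\le b$. I would handle this by setting $q_1(r_1)=0.5\,W^*_1(r_1)=0.5\,\mathbb E_{\mathbf r_{-1}}[w^*_1(r_1,\mathbf r_{-1})]$. Feasibility follows by averaging the off-line constraint $a_1 w^*_1(\mathbf r^n)\le b$ over $\mathbf r_{-1}$, together with $b\ge 0$, which holds because the off-line problem is feasible at $\mathbf w=0$. This base case is implicitly part of the induction in the proof of Lemma \ref{lemma:imp}, so I would simply note it.

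I expect no real obstacle, since the proposition is a direct corollary of Lemma \ref{lemma:imp}. The substantive work in the paper lies in verifying the hypothesis, namely bounding $h_{i+1}$ through its LP dual, for specific constraint structures, and that is not part of this statement.
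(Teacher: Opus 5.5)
Your argument is correct for the statement as literally written. With $\mathbf Q^n=0.5\,\mathbf W^*$, the hypothesis is verbatim the condition of Lemma~\ref{lemma:imp}. Your round-1 check is a sensible patch: with the convention that an infeasible $h$ equals $-\infty$, it is only strictly needed when $n=1$, where that condition is vacuous.

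The paper's proof takes a different route. It does not treat the inequality $h_{i+1}(0.5\,{\mathbf W^i}^*)\ge 0.5\,W^*_{i+1}(r_{i+1})$ as given; it proves it by induction in the single-constraint case $K=1$ announced just before the proposition. It starts from feasibility of $0.5\,{\mathbf W^2}^*$. Once rounds $1,\dots,i$ are feasible, the optimal choice is $z^*(\mathbf r^i)=\big(b-\sum_{j\in[i]}a_jq_j(\mathbf r^j)\big)/a_{i+1}$. Averaging, using $\mathbb E_{\mathbf r^{j-1}}[q_j(\mathbf r^j)]=0.5\,W^*_j(r_j)$, independence and the averaged off-line constraint, gives $h_{i+1}(0.5\,{\mathbf W^i}^*)\ge 0.5\,b/a_{i+1}\ge 0.5\,W^*_{i+1}(r_{i+1})$. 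Lemma~\ref{lemma:imp} then closes the induction.

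What each route buys:
\begin{itemize}
\item Your route is a clean, structure-free reduction, valid for any $K$ and any scaling factor in $(0,1]$. This is exactly how Theorem~\ref{thm:multiknapsack} uses the proposition.
\item The paper's route additionally gives an actual $\frac{1}{2}$-prophet inequality for a single knapsack constraint. The proof of Theorem~\ref{thrm: onlinePolymatroidMainResult} relies on this when it cites Proposition~\ref{prop:oneConstrained} to obtain $\mathbf q'$ inside the knapsack of capacity $g(\{i+1\})$. Your version alone does not supply that.
\end{itemize}

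One small fix: $b\ge 0$ does not hold ``because the off-line problem is feasible at $\mathbf w=0$''. It follows from $b\ge a_1w_1^*(\mathbf r^n)\ge 0$, i.e.\ from feasibility of $\mathbf w^*$ together with $A\ge 0$ and $\mathbf w^*\ge 0$.
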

\begin{proof}
	Notice that $0.5 {\mathbf{W}^2}^*$ is feasible in any on-line problem with nonnegative $A$ thanks to scaling by $0.5$. Hence, we can proceed as follows. Assume that there exists some $i<n$ such that $h_{j}(0.5 {\mathbf{W}^{j-1}}^*) \geq 0.5W^*_{j}(r_{j})$ for all $r_{j} \in R_j$ and $j \in [i]$. Then, the problem $h_{i+1}(0.5 {\mathbf{W}^i}^*)$ is feasible. Therefore, for any feasible ${\mathbf q}$ with the expected values $0.5 {\mathbf{W}^i}^*$, it is optimal to set:
    \begin{align*}
        z^*(\mathbf r^i) &= \big(b- \sum_{j \in [i]} a_{j} q_j(\mathbf r^j) \big)/a_{i+1} & \forall \mathbf r^i {\in \times_{\ell \in [i]} R_\ell},
    \end{align*}
	as there is only one constraint for each $\mathbf r^i$. Then, the objective becomes:
	\begin{align*}
		h_{i+1}(0.5 {\mathbf{W}^i}^*) &= \mathbb{E}_{\mathbf r^i} [z^*(\mathbf r^i)] = \mathbb{E}_{\mathbf r^i} [b- \sum_{j \in [i]} a_{j} q_j(\mathbf r^j)]/a_{i+1}\\
		&=\big(b- \sum_{j \in [i]} a_{j} \mathbb{E}_{r_j} [0.5 W^*_j(r_j)] \big)/a_{i+1} \\
        &= 0.5 b/a_{i+1} \geq 0.5 W^*_{i+1}(r_{i+1}) &\forall r_{i+1} {\in R_{i+1}}.
	\end{align*}
	Hence, via induction and Lemma \ref{lemma:imp}, we conclude that $0.5 {\mathbf{W}}^*$ is implementable.
\end{proof}

The linear program for the stopping problem used in \citet{dk} begins with the ex-post variables for the off-line problem, $w_{i}(\mathbf r^n)$, and imposes constraints to convert them to into ex-post variables for the on-line version. Specifically, for each $i \in [n]$ and $(\mathbf r^i)$, they require
$$w_i(r_1, \ldots, r_i, t_{i+1}, \ldots, t_n) = w_i(r_1, \ldots, r_i, s_{i+1}, \ldots, s_n)$$
for all $(t_{i+1}, \ldots, t_n)$ and $(s_{i+1}, \ldots, s_n)$. These are called non-anticipativity constraints. In our case, we work with the interim allocation variables. The off-line and on-line versions share the same number of interim allocation variables but differ in their underlying feasible region. We establish a prophet inequality by showing that the interim allocation variables for the off-line problem can be scaled so as to be feasible for the on-line problem.

\subsection{Minkowski Sums}

The Minkowski sum of polytopes $P$ and $Q$ in $\mathbb{R}^n$ is defined as $\alpha P+ \beta Q = \{\alpha \mathbf p+ \beta \mathbf q: \; \mathbf p \in P,\; \mathbf q \in Q\}$ for any $\alpha, \beta \in \mathbb{R}$. 
In this section, we show how Minkowski sums preserve results on prophet inequalities. We first introduce the concept. 

\begin{thrm}\label{theorem: Minkowski_Result}
	Consider an on-line selection problem whose underlying polyhedron is the Minkowski sum of polytopes $\sum_{m \in [M]} \alpha_m P_m \subseteq \mathbb{R}^n$ for some $M \in \mathbb{N}$. Let $\alpha_m \geq 0$, and suppose that the on-line selection problem defined on the polytope $P_m$ admits a prophet inequality with factor of $p_m \in [0,1]$ for all $m \in [M]$. Then, the on-line selection problem defined over the Minkowski sum, $\sum_{m \in [M]} \alpha_m P_m$, admits a prophet inequality with factor at least $\min_{m \in [M]} \{p_m\}$.
\end{thrm}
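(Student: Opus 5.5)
The plan is to work directly at the level of ex-post allocations. I will decompose the prophet's optimal selection over the Minkowski sum into pieces lying in the individual polytopes $P_m$. Then I will run the $M$ on-line policies guaranteed by the hypothesis in parallel and add up their selections with weights $\alpha_m$. Throughout, write $p=\min_{m\in[M]} p_m$. Let $\mathrm{OPT}(P)$ denote the prophet's (off-line) optimal value over a polytope $P$, i.e. $\mathbb{E}_{\mathbf r}[\max_{\mathbf w\in P}\mathbf r\cdot \mathbf w]$.

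\emph{Step 1: decompose the prophet.} Since each $R_\ell$ is finite, the reward profile $\mathbf r^n$ takes finitely many values. For each profile, let $\mathbf w^*(\mathbf r^n)$ be the prophet's optimal point in $\sum_m \alpha_m P_m$. By definition of the Minkowski sum, choose $\mathbf w^m(\mathbf r^n)\in P_m$ with $\mathbf w^*(\mathbf r^n)=\sum_m \alpha_m \mathbf w^m(\mathbf r^n)$; finiteness means no measurability issue arises. Then, by linearity of the objective and $\alpha_m\ge 0$,
\begin{align*}
\mathrm{OPT}\Big(\sum_m \alpha_m P_m\Big)=\sum_m \alpha_m\,\mathbb{E}_{\mathbf r}[\mathbf r\cdot \mathbf w^m(\mathbf r)]\le \sum_m \alpha_m\, \mathrm{OPT}(P_m).
\end{align*}
(The reverse inequality also holds, since the sum of per-polytope maximizers is feasible, but it is not needed.)

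\emph{Step 2: combine the on-line policies.} By hypothesis, for each $m$ there is a feasible on-line policy $\mathbf q^m$ for $P_m$. Each $q^m_j(\mathbf r^j)$ depends only on $\mathbf r^j$, and its value satisfies $\mathbb{E}[\mathbf r\cdot \mathbf q^m(\mathbf r)]\ge p_m\,\mathrm{OPT}(P_m)\ge p\,\mathrm{OPT}(P_m)$. (In the paper's LP language, these are ex-post allocations implementing an interim allocation of value at least $p_m$ times the prophet's value, as in Lemma \ref{lemma:imp}.) Define
\begin{align*}
q_j(\mathbf r^j)=\sum_m \alpha_m q^m_j(\mathbf r^j)\qquad \forall j\in[n],\ \forall\mathbf r^j .
\end{align*}
This is non-anticipative, since each summand is. For every complete profile, $\mathbf q(\mathbf r)=\sum_m\alpha_m \mathbf q^m(\mathbf r)\in\sum_m \alpha_m P_m$. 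Its value is $\sum_m \alpha_m\mathbb{E}[\mathbf r\cdot \mathbf q^m]\ge p\sum_m\alpha_m \mathrm{OPT}(P_m)\ge p\,\mathrm{OPT}(\sum_m\alpha_m P_m)$ by Step 1. This is the claimed prophet inequality.

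\emph{Main obstacle: on-line feasibility.} The delicate point is making sure the combined policy is feasible in the on-line sense, not merely at the end. The paper's formulation requires every prefix to satisfy the constraints, as in $A^j\mathbf q(\mathbf r^j)\le b$. So I must check that whenever each partial selection $(q^m_1,\dots,q^m_j)$ is a valid on-line prefix for $P_m$, the weighted sum is a valid prefix for $\sum_m\alpha_m P_m$. I will formalize a valid prefix as one that can be completed to a point of the polytope for every continuation of rewards (equivalently, lies in the projection of $P_m$ onto the first $j$ coordinates, given how later coordinates may be chosen). With that formalization the check is straightforward: any completions $\mathbf c^m$ of the individual prefixes sum to a completion $\sum_m\alpha_m \mathbf c^m$ of the combined prefix, and projection commutes with Minkowski sums. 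For down-closed polytopes given by $Ax\le b$ with $A\ge 0$, as in the paper's model, zero-completion works, so prefix feasibility is just membership of the zero-padded prefix in the polytope. This again is preserved under nonnegative Minkowski combinations. I will state this observation as a short lemma before running Steps 1–2, and note that the case $\alpha_m=0$ is trivial and can be dropped.
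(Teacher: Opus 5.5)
Your proposal is correct and follows the paper's route. You decompose the prophet's optimum profile by profile into points of the $P_m$, then take the $\alpha_m$-weighted Minkowski combination of the on-line policies for the $P_m$. You are slightly more careful than the paper, which asserts the equalities $Z_{offline}(\sum_m\alpha_m P_m)=\sum_m\alpha_m Z_{offline}(P_m)$ and $Z_{online}(\sum_m\alpha_m P_m)=\sum_m\alpha_m Z_{online}(P_m)$, when only the directions you actually prove are needed: $\le$ for the prophet, and $\ge$ via feasibility and non-anticipativity of the combined policy.
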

\begin{proof}
	Let $\mathbf w^* \in \sum_{m \in [M]} \alpha_m P_m$ be the optimal off-line solution. The Minkowski sum definition implies that there exist $\mathbf w_m^* \in P_m$ for all $m \in [M]$ such that $\mathbf w^*= \sum_{m \in [M]} \alpha_m \mathbf w_m^*$. 
    
    As the Minkowski sum coefficients, $\{\alpha_m\}_{m \in [M]}$, are all non-negative, and the objective function of the problem is linear over $\mathbf w^*$, the optimal off-line solution of the problem when the prophet is limited to $P_m$ should coincide with $\mathbf w_m^*$ for all $m \in [M]$. If we let $Z_{offline}(P)$ denote the optimal objective value of the off-line problem under some polyhedron $P$, then the following equality must hold:
    \begin{align*}
        Z_{offline}\Big(\sum_{m \in [M]} \alpha_m P_m\Big) = \sum_{m \in [M]} \alpha_m Z_{offline}(P_m).
    \end{align*}
	Letting $Z_{online}(P)$ denote the optimal objective value of the on-line problem under some polyhedron $P$, we also have:
    \begin{align*}
        p_m Z_{offline}(P_m) &\leq Z_{online}(P_m) &\forall m \in [M].
    \end{align*}
	Therefore, we can conclude our proof using the given prophet inequality factors:
    \begin{align*}
        Z_{online}\Big( \sum_{m \in [M]} \alpha_m P_m \Big) &= \sum_{m \in [M]} \alpha_m Z_{online}(P_m) \geq \sum_{m \in [M]} \alpha_m p_m Z_{offline}(P_m)\\
        &\geq \min_{m \in [M]} \{p_m\} Z_{offline}\Big(\sum_{m \in [M]} \alpha_m P_m\Big),
    \end{align*}
    where the first equality follows from the fact that the Minkowski sum of the optimal on-line solutions, $\mathbf q_m^* \in P_m$ for $m \in [M]$, is an optimal on-line solution of $\sum_{m \in M} \alpha_m P_m$.
\end{proof}

{Theorem \ref{theorem: Minkowski_Result} provides a simple way to derive prophet inequalities for certain polytopes. Here is an example. The Minkowski sum of simplices is a permutahedron, see \citet{AguiarArdila2017}. As the on-line selection problem defined over a simplex admits a factor $\frac{1}{2}$ prophet inequality, it follows by Theorem \ref{theorem: Minkowski_Result}, that the on-line selection problem defined over a permutahedron, which admits nonnegative Minkowski sum coefficients, also admits a factor $\frac{1}{2}$ prophet inequality. Also, the factor $\frac{1}{2}$ prophet inequality for uniform matroids is an immediate consequence (see \citet{alaei2014bayesian} for a recent treatment).}

\section{Applications}
We demonstrate the power of the approach by showing how to derive prophet inequalities for a variety of settings. 

\subsection{$K$ Many Constraints}

In this section, we prove that the selection problem with independently drawn rewards and $K$ many constraints admits a $\frac{1}{K+1}$-prophet inequality. By Proposition \ref{prop:oneConstrained}, it suffices to show that $h_{i+1}(\frac{1}{K+1} {\mathbf{W}^i}^*) \geq \frac{1}{K+1} W^*_{i+1}(r_{i+1})$ for all $r_{i+1} \in R_{i+1}$.

\begin{thrm}\label{thm:multiknapsack}
	Consider a selection problem with independently drawn nonnegative rewards where the feasible on-line policies live in a polyhedron described by $A \in \mathbb{R}_+^{K \times n}$ and $b \in \mathbb{R}^n$. The optimal on-line strategy achieves at least $\frac{1}{K+1}$ of the prophet's value.
\end{thrm}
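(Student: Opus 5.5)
The plan is to follow the route already laid out before Proposition \ref{prop:oneConstrained}, with the factor $0.5$ replaced by $c = \frac{1}{K+1}$. The argument of that proposition (an induction on $i$ combined with Lemma \ref{lemma:imp}) does not use the specific value $0.5$. So it suffices to prove that whenever $c\,{\mathbf W^i}^*$ is implementable up to round $i$, we have
\[
h_{i+1}(c\,{\mathbf W^i}^*) \geq c\,W^*_{i+1}(r_{i+1}) \quad \text{for all } r_{i+1}\in R_{i+1}.
\]
Implementability of $c\,\mathbf W^*$ then gives an on-line value $c\sum_j \mathbb{E}[r_jW_j^*(r_j)]$, which is $c$ times the prophet's value. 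The base case is handled as in Proposition \ref{prop:oneConstrained}.

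For the inductive step I would fix an ex-post allocation $\mathbf q \in FP^i[c\,{\mathbf W^i}^*]$, which exists by the induction hypothesis. Write $s_k(\mathbf r^i) = \sum_{j\in[i]} a_{kj} q_j(\mathbf r^j)$ for the amount of resource $k$ already consumed. Let $u = \min\{ b_k/a_{k,i+1} : a_{k,i+1}>0\}$, with $u=+\infty$ if column $i+1$ is zero, in which case there is nothing to prove. Every off-line solution satisfies $w_{i+1}\le u$ pointwise, so $W^*_{i+1}(r_{i+1}) \le u$. The key choice is the second-stage variable
\[
z(\mathbf r^i) = u\Big(1 - \sum_{k\in[K]} \frac{s_k(\mathbf r^i)}{b_k}\Big)^+ .
\]
I would not use the naive $\min_k (b_k - s_k)/a_{k,i+1}$: its expectation is hard to bound when the constraints have different scales. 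Feasibility in (\ref{cap}) is immediate. If $z>0$, then for each $k$ with $a_{k,i+1}>0$,
\[
s_k + a_{k,i+1} z \le s_k + b_k\Big(1 - \frac{s_k}{b_k}\Big) = b_k .
\]
Constraints with $a_{k,i+1}=0$ hold because $\mathbf q$ is feasible. Constraint (\ref{expec}) is inherited from $\mathbf q$.

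For the objective I would drop the positive part to get $\mathbb{E}[z] \ge u\big(1 - \sum_k \mathbb{E}[s_k]/b_k\big)$. This is where independence enters. Since $\mathbb{E}_{\mathbf r^{j-1}}[q_j(\mathbf r^j)] = c\,W^*_j(r_j)$, we get
\[
\mathbb{E}[s_k] = c\sum_{j\in[i]} a_{kj}\,\mathbb{E}_{r_j}[W^*_j(r_j)] = c\,\mathbb{E}_{\mathbf r}\Big[\sum_{j\in[i]} a_{kj} w^*_j(\mathbf r)\Big] \le c\,b_k ,
\]
where the last step uses off-line feasibility pointwise. If the constraint is only met with equality in the lower bound (\ref{expec}), I would first trim $\mathbf q$ down so that equality holds, which is harmless because $A\ge 0$. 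Hence $\mathbb{E}[z] \ge u(1-Kc) = u/(K+1) \ge c\,W^*_{i+1}(r_{i+1})$, which closes the induction.

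The main obstacle is the choice of $z$. A coordinate-wise minimum produces a loss proportional to $\sum_k b_k/a_{k,i+1}$ rather than to $u$. The normalized "total load" form $u(1-\sum_k s_k/b_k)$ aggregates all $K$ constraints into one budget, and it is precisely what produces the $K+1$ in the denominator. The remaining delicate points are routine: excluding resources with $a_{k,i+1}=0$ from $u$, and checking that the interim expectations are taken with respect to independent coordinates so that the identity for $\mathbb{E}[s_k]$ holds.
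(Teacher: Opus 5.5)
Your proof is correct. It reaches the key inequality $h_{i+1}(\frac{1}{K+1}{\mathbf W^i}^*)\ge\frac{1}{K+1}W^*_{i+1}(r_{i+1})$ by a different route than the paper, while the surrounding induction through Lemma~\ref{lemma:imp} and Proposition~\ref{prop:oneConstrained} is the same.

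The paper normalizes $a_{k,i+1}=1$ and uses exactly the coordinate-wise minimum you set aside, $z^*(\mathbf r^i)=\min_k\{b_k-s_k(\mathbf r^i)\}$. It proceeds in four steps:
(i) partition the profiles into classes $\mathcal R_k$ on which constraint $k$ binds;
(ii) discard the classes outside $\mathcal K=\{k: f_k\ge\frac{1}{K+1}\}$, which is legitimate because $z^*\ge 0$;
(iii) obtain $h_{i+1}\ge\sum_{k\in\mathcal K}(f_k-\frac{1}{K+1})b_k\ge\min_\ell b_\ell\sum_{k\in\mathcal K}(f_k-\frac{1}{K+1})$;
(iv) finish with a counting bound on $\sum_{k\notin\mathcal K}f_k$.

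So the minimum is not the obstacle you describe. The loss proportional to $\sum_k b_k/a_{k,i+1}$ is avoided there because each coefficient $f_k-\frac{1}{K+1}$ with $k\in\mathcal K$ is nonnegative, which lets $\min_\ell b_\ell$ be factored out.

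Your route replaces the partition and the counting with the pointwise inequality $\min_k(1-x_k)\ge 1-\sum_k x_k$ for $x_k\ge 0$. Since your $z$ is feasible, it lies pointwise below $z^*$. In effect you give item $i+1$ only the slack $u(1-\sum_k s_k/b_k)^+$ of a single aggregated budget. The scaled off-line solution consumes at most $K\cdot\frac{1}{K+1}$ of that budget in expectation, so the linear computation of Proposition~\ref{prop:oneConstrained} goes through with a $\frac{1}{K+1}$ fraction of the budget left.

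Your version is shorter and makes the origin of $K+1$ transparent. The paper's bound keeps track of which constraint actually binds and with what probability. Both end at the same value, $\frac{1}{K+1}\min_k b_k/a_{k,i+1}$.

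One detail to tidy: your $z$ divides by $b_k$, so constraints with $b_k=0$ should be set aside first. Such constraints force $s_k\equiv 0$, and if also $a_{k,i+1}>0$ then $u=0$ and there is nothing to prove.
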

\begin{proof}
First, notice that $\frac{1}{K+1} {\mathbf{W}^{K+1}}^*$ is feasible thanks to scaling by $\frac{1}{K+1}$. If $n \leq K+1$, we are done. Otherwise, as an induction hypothesis, we can assume that there exists some $i<n$ such that:
$$
h_{j}(\frac{1}{K+1} {\mathbf{W}^{j-1}}^*) \geq \frac{1}{K+1}W^*_{j}(r_{j}) \qquad \forall r_j \in R_j, \; \forall j \in [i].
$$
Then, the problem $h_{i+1}(\frac{1}{K+1}{\mathbf{W}^{i}}^*)$ is feasible, and for any feasible ${\mathbf q}$ with the interim values $\frac{1}{K+1} {\mathbf{W}^{i}}^*$, it is optimal to set:
$$
z^*(\mathbf r^i) = \min_k \{b_k- \sum_{j \in [i]} a_{kj} q_j(\mathbf r^j)\} \qquad \forall \mathbf r^i \in \times_{\ell \in [i]} R_\ell.
$$
Note that, to simplify the notation, we normalize the problem parameters so that the coefficient of $z$ becomes $a_{i+1} =1$. This is without loss.

We partition the set of all reward profiles $\mathcal{R}^i= \times_{j \in [i]} R_j$ into $K$ sets with respect to the tightest bound on $z^*$ variables. {Any partition, $\{\mathcal{R}_k\}_{k \in [K]}$, that satisfy the following property would suffice:
\begin{align*}    
    \mathcal{R}_k &\subseteq  \{\mathbf r^i \in \mathcal{R}^i : \; z^*(\mathbf r^i)= b_k- \sum_{j \in [i]} a_{kj} q_j(\mathbf r^j) \} & \forall k \in [K].
\end{align*}
}
Finally, we define probability $f_k = \sum_{\mathbf r^i \in \mathcal{R}_k} f_{\mathbf r^i}$ for all $k \in [K]$ and set $\mathcal{K}=\{k \in [K] \; : \; f_k \geq \frac{1}{K+1}\}$. Notice that $\mathcal{K}$ cannot be empty since otherwise we would have the following contradiction:
$$
\sum_{k \in [K]} f_k = \sum_{k \in [K]} \sum_{\mathbf r^i \in \mathcal{R}_k} f_{\mathbf r^i} = \sum_{\mathbf r^i \in \mathcal{R}^i} f_{\mathbf r^i} < \sum_{k \in [K]} \frac{1}{K+1} = \frac{K}{K+1}< 1.
$$
Then, we can use $\{\mathcal{R}_k\}_{k \in [K]}$ to derive a lower bound for the optimal objective function:
\begin{align*}
	h_{i+1}(\frac{1}{K+1} {\mathbf{W}^i}^*) &= \mathbb{E}_{\mathbf r^i} [z^*(\mathbf r^i)] = \sum_{k \in [K]} \sum_{\mathbf r^i \in \mathcal{R}^i} f_{\mathbf r^i} [b_k- \sum_{j \in [i]} a_{kj} q_j(\mathbf r^j)]\\
	&= \sum_{k \in [K]} f_k b_k - \sum_{k \in [K]} \sum_{\mathbf r^i \in \mathcal{R}_k} f_{\mathbf r^i} \sum_{j \in [i]} a_{kj} q_j(\mathbf r^j),\\
    &\geq \sum_{k \in \mathcal{K}} f_k b_k - \sum_{k \in \mathcal{K}} \sum_{\mathbf r^i \in \mathcal{R}_k} f_{\mathbf r^i} \sum_{j \in [i]} a_{kj} q_j(\mathbf r^j),\\
	&\geq \sum_{k \in \mathcal{K}} f_k b_k - \sum_{k \in \mathcal{K}} \sum_{\mathbf r^i \in \mathcal{R}^i} f_{\mathbf r^i} \sum_{j \in [i]} a_{kj} q_j(\mathbf r^j),\\
	&\geq \sum_{k \in \mathcal{K}} f_k b_k - \sum_{k \in \mathcal{K}} \sum_{j \in [i]} a_{kj} \sum_{r_j \in R_j} f_{r_j} \frac{W_j(r_j)}{K+1}.
\end{align*}
We know from the off-line problem that $\sum_{j \in [i]} a_{kj} \sum_{r_j} f_{r_j} \frac{W_j(r_j)}{K+1} \leq \frac{b_k}{K+1}$ for all $k \in [K]$. Hence, the following lower bound follows:
\begin{align*}
	h_{i+1}(\frac{1}{K+1} {\mathbf{W}^i}^*) &\geq \sum_{k \in \mathcal{K}} \Big(f_k-\frac{1}{K+1}\Big) b_k \\
    &\geq \min_{\ell \in [K]} b_\ell \sum_{k \in \mathcal{K}} \Big(f_k-\frac{1}{K+1}\Big),\\
	&= \min_{\ell \in [K]} b_\ell \Big(\sum_{k \in \mathcal{K}} f_k-\frac{|\mathcal{K}|}{K+1}\Big)  \\
	&= \min_{\ell \in [K]} b_\ell \Big(\frac{K+1-|\mathcal{K}|}{K+1} - \sum_{k \in [K] \setminus \mathcal{K}} f_k\Big) \\
    &\geq \frac{1}{K+1} \min_{\ell \in [K]} b_\ell \geq \frac{1}{K+1}W^*_{i+1}(r_{i+1}) & \forall r_{i+1} \in R_{i+1},
\end{align*}
where the penultimate inequality follows from the fact that $\sum_{k \in [K] \setminus \mathcal{K}} f_k \leq \sum_{k \in [K] \setminus \mathcal{K}} \frac{1}{K+1} =\frac{K-|\mathcal{K}|}{K+1}$. Hence, by induction and Lemma \ref{lemma:imp}, we conclude that $\frac{\mathbf W^*}{K+1}$ is implementable.
\end{proof}

\subsection{Polymatroids} \label{Subsection:PolymatroidsProphet}

Suppose that the underlying polyhedron of the selection problem is a polymatroid defined by a non-decreasing, non-negative submodular function,  $g: 2^{n} \mapsto \mathbb{R}$ where $2^n$ denotes the power set of $[n]$. 
In this case, we recover the $\frac{1}{2}$ prophet inequality of \citet{dutting2015polymatroid} as a Corollary of a subsequent result, Theorem \ref{thrm: onlinePolymatroidMainResult}. In fact our result is slightly stronger in that we do not assume that the submodular function is integer valued. 
\begin{Corollary}
    If the underlying polyhedron of the selection problem is defined by a polymatroid, then the optimal on-line strategy achieves at least $\frac{1}{2}$ of the prophet's value.
\end{Corollary}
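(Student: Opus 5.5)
The plan is to use the reduced-form machinery of Proposition \ref{prop:oneConstrained} and Lemma \ref{lemma:imp}. Let $({\mathbf w}^*,{\mathbf W}^*)$ be an optimal off-line solution for the polymatroid $P(g)=\{x\ge 0: \sum_{j\in S}x_j\le g(S)\ \forall S\subseteq[n]\}$. It then suffices to show, by induction on $i$, that $h_{i+1}(0.5{\mathbf W^i}^*)\ge 0.5W^*_{i+1}(r_{i+1})$ for every $r_{i+1}$. Since $W^*_{i+1}(r_{i+1})\le g(\{i+1\})$, the target reduces to
\[
\mathbb{E}_{\mathbf r^i}[z(\mathbf r^i)] \;\ge\; \tfrac12\, g(\{i+1\})
\]
for a suitably chosen on-line ex-post allocation ${\mathbf q}$ with interim values $0.5{\mathbf W^i}^*$. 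As in Theorem \ref{thm:multiknapsack}, given ${\mathbf q}$ the optimal $z$ is the largest feasible increment,
\[
z(\mathbf r^i)=\min_{S\ni i+1}\Big\{g(S)-\sum_{j\in S\cap[i]}q_j(\mathbf r^j)\Big\}.
\]

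\textbf{Choice of ${\mathbf q}$.} The idea is to fix the on-line ex-post allocation greedily and keep it polymatroid-feasible. The natural candidate is to take $q_j(\mathbf r^j)$ as an appropriately scaled conditional expectation of the prophet's allocation. The concrete first try is to simulate an independent copy of the remaining rewards $\tilde{\mathbf r}$ and set $q_j=\tfrac12 w^*_j(\mathbf r^j,\tilde{\mathbf r}_{>j})$. This has the right interim value, by independence, but is not automatically feasible for the prefix constraints once the coordinates are built from different simulated tails. So the induction must also verify that the slack after the first $i$ steps is large in expectation, which is exactly what the bound on $z$ measures.

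\textbf{Main step.} Lower-bound $\mathbb{E}[z]$ via submodularity. For any $S\ni i+1$ write $S=T\cup\{i+1\}$ with $T\subseteq[i]$. By monotonicity and submodularity, $g(S)\ge g(\{i+1\})+g(T\cup\{i+1\})-g(\{i+1\})$. The cheaper route is to use the \emph{contraction} inequality $g(T\cup\{i+1\})-\sum_{T}q_j\ge g(\{i+1\})-\big(\sum_{j\in T}q_j - (g(T\cup\{i+1\})-g(\{i+1\}))\big)$. The term in parentheses is the excess of the on-line load over the capacity of the polymatroid contracted by $\{i+1\}$. The goal is to show the expected worst-case excess is at most $\tfrac12 g(\{i+1\})$.

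For this I would use the standard exchange/duality argument of the polymatroid prophet inequality, rewritten in reduced form. Compare the on-line load against an independent prophet allocation $\tfrac12{\mathbf w}^*(\tilde{\mathbf r})$, whose prefix sums satisfy $\sum_{j\in T}\tfrac12 w^*_j\le \tfrac12 g(T)$. Then apply submodularity to merge the two minimizing sets, namely the tight set for ${\mathbf q}$ and the tight set for $\tilde{\mathbf w}$. This mirrors the single-constraint calculation in Proposition \ref{prop:oneConstrained}, where the $0.5$ scaling left exactly half the capacity in expectation.

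\textbf{Main obstacle.} The hard part is controlling the minimum over exponentially many sets $S$ inside an expectation. The minimizing set depends on $\mathbf r^i$, so linearity of expectation cannot be used directly as in the $K=1$ case. Partitioning by the tight constraint, as in Theorem \ref{thm:multiknapsack}, gives only $1/(K+1)$ with $K=2^n$, which is useless. I would instead exploit that the minimizers form a lattice, closed under union and intersection by submodularity. This lets me take the unique maximal tight set $T(\mathbf r^i)$ and pay for it via a charging argument. The argument would charge the used capacity on $T(\mathbf r^i)$ to half of the prophet's expected capacity, using the \emph{on-line} polymatroid result (Theorem \ref{thrm: onlinePolymatroidMainResult}) as the general version, with the fixed $g$ as a special case.
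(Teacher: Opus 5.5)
\textbf{Verdict.} Your closing clause is the paper's entire proof, and on its own it is complete. A reward-independent $g$ is an on-line submodular function: it is nonnegative, nondecreasing and submodular for every reward profile, and $g(I,\cdot)$ trivially depends only on $\mathbf r^I$ because it depends on no rewards at all. So Theorem~\ref{thrm: onlinePolymatroidMainResult} applies verbatim and gives the factor $\frac12$. However, you present this reduction as an ingredient of an unspecified charging argument, and the direct argument before it does not go through. You should state the reduction as the proof and drop the rest.

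\textbf{Gaps in the direct argument.} Read as a self-contained proof, the proposal has a genuine gap at its core.
\begin{itemize}
\item The simulated-tail allocation is, as you note, infeasible, and nothing replaces it. For example, take $g(T)=1$ on nonempty $T$. An increasing prefix in which each $r_j$ beats its own simulated tail gives $\sum_{j\in T}q_j=|T|/2>g(T)$ once $|T|\ge 3$.
\item The two displayed inequalities in your main step are identities, so they use neither monotonicity nor submodularity. The first reads $g(S)\ge g(S)$; the second has $g(T\cup\{i+1\})-\sum_{j\in T}q_j$ on both sides.
\item The crux is only asserted: that the expected worst-set excess of the on-line load over the contracted capacity is at most $\frac12 g(\{i+1\})$.
\item The maximal-tight-set remark does not remove the obstacle you correctly identified, since that set still varies with $\mathbf r^i$.
\end{itemize}

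\textbf{What makes the direct route work.} In the proof of Theorem~\ref{thrm: onlinePolymatroidMainResult}, the minimum over sets is not bounded for a given $\mathbf q$; instead $\mathbf q$ is built so that the minimum disappears pointwise. Write $\mathbf q=\mathbf q'+\mathbf q''$.
\begin{itemize}
\item $\mathbf q''$ lies, for every profile, in the contraction $\hat g(I)=g(I\cup\{i+1\})-g(\{i+1\})$, with interim values $\frac12\hat W$. It is supplied by the induction hypothesis applied to $\hat g$. This is why the hypothesis must range over all such functions, not just the fixed $g$; for static $g$ the contraction is again static.
\item $\mathbf q'$ carries the residual $\frac12(W^*-\hat W)$. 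Couple the greedy solutions for $g$ and $\hat g$ under the same order; the off-line residual $\mathbf w^*-\hat{\mathbf w}$ then lies in a single knapsack of capacity $g(\{i+1\})$. The single-constraint argument of Proposition~\ref{prop:oneConstrained} therefore makes $\mathbf q'$ implementable inside that knapsack.
\end{itemize}
Then $z(\mathbf r^i)\ge g(\{i+1\})-\sum_{j\in[i]}q'_j(\mathbf r^j)$ for every $\mathbf r^i$, and taking expectations gives at least $\frac12 g(\{i+1\})$. This pair need not satisfy the constraints $\sum_{j\in I}q_j\le g(I)$; these are handled by Lemmas~\ref{lem: EquivalenceResult}--\ref{lem: equivalentLowerBound}. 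Your contraction observation is the right starting point, but it pays off only through this decomposition, not through a charging argument.
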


The more general result allows the underlying polyhedron itself to depend upon the realized rewards.

\subsection{On-line Polymatroids}

We introduce a new setting where the underlying polymatroid {\em depends} on the realized rewards. Formally, 
let $g: 2^{[n]} \times (\times_{\ell \in [n]} R_\ell)  \mapsto \mathbb{R}$ be a non-negative, non-decreasing, submodular function for any given reward profile. That is, $g$ satisfies the following properties for all $\mathbf r^n \in \times_{\ell \in [n]} R_\ell$:
\begin{itemize}
    \item $g(I,\mathbf r^n) \geq 0$ for all $I \subseteq [n]$,
    \item $g(I,\mathbf r^n) \leq g(J,\mathbf r^n)$ for all $I \subseteq J \subseteq [n]$,
    \item $g(I,\mathbf r^n) + g(J,\mathbf r^n) \geq g({I \cup J},\mathbf r^n) + g({I \cap J},\mathbf r^n)$ for all $I, J \subseteq [n]$.
\end{itemize}
Hence, $g$ can be defined through different submodular functions defined for every reward profile. 

To ensure that $g$ depends only on the realized rewards, we suppose that for all $\mathbf r^n \in \times_{\ell \in [n]} R_\ell$ and $I \subseteq [n]$:
\begin{align*}
    &g(I,\mathbf r^{I},\mathbf r^{-I}) = g(I,\mathbf r^{I},\hat{\mathbf r}^{-I}) & \forall \hat{\mathbf r}^{-I} \in \times_{\ell \in [n] \setminus I} R_\ell,
\end{align*}
where $\mathbf r^{I} \in \times_{\ell \in I} R_\ell$ and $\mathbf r^{-I} \in \times_{\ell \in [n] \setminus I} R_\ell$. Call such a $g$ an \textit{on-line submodular function}, and the polymatroid defined by it an \textit{on-line polymatroid}. We will denote any such $g$ through a simplified notation, $g(I,\mathbf r^{I})$, as the remaining profile of rewards does not affect the value of $g$. We consider the case when the selection is constrained to lie in an on-line polymatroid. Note that a static polymatroid, which does not depend on rewards, is a special case of an on-line polymatroid. Hence, our results in this section also apply to them.

\begin{example}
    The following are some examples of on-line submodular functions that are nonnegative and nondecreasing. Note that we assume all rewards to be nonnegative.
    \begin{equation*}
        \begin{aligned}
        &g_1(I,\mathbf{r}^I) = \min\{1 + \sum_{\ell \in I} r_\ell, B\} &\forall \mathbf r^I \in \times_{\ell \in I} R_\ell, \forall I \subseteq [n],\\
        &g_2(I,\mathbf{r}^I) = \begin{cases}
            n &\text{if } \max_{\ell \in I} r_\ell \geq T,\\
            |I| &\text{if } \max_{\ell \in I} r_\ell < T,\\
        \end{cases} &\forall \mathbf r^I \in \times_{\ell \in I} R_\ell, \forall I \subseteq [n],\\
        &g_3(I,\mathbf{r}^I) = \begin{cases}
            \sum_{\ell \in I} \mathds{1}_{\{r_\ell \geq T\}} &\text{if } \min_{\ell \in I} r_\ell \geq T,\\
            n &\text{if } \min_{\ell \in I} r_\ell < T,\\
        \end{cases} &\forall \mathbf r^I \in \times_{\ell \in I} R_\ell, \forall I \subseteq [n],\\
        \end{aligned}
    \end{equation*}
    where $B$ and $T$ are fixed real numbers.
\end{example}

On-line polymatroids allow one to capture settings where the available resources for servicing requests can change with the realized rewards. For example, bringing in extra servers or ramping up peaker plants.

We show that the approach presented in Section \ref{section: model} extends to this setting by proving that $h_{i+1}(0.5 {\mathbf{W}^i}^*) \geq 0.5g(\{i+1\}, r_{i+1}) \geq 0.5W^*_{i+1}(r_{i+1})$ for all $r_{i+1} {\in R_{i+1}}$ and all $i \in [n-1]$ (see Proposition \ref{prop:oneConstrained}).
This yields a $\frac{1}{2}$ prophet inequality, which, as far as we know, is new.

We fix an arbitrary $\hat r_{i+1} {\in R_{i+1}}$ and consider the following model, which we denote interchangeably by $h_{i+1}(0.5{\mathbf{W}^i}^*)$ and \eqref{eq: MainProblem}:
\begin{align}
\max_{z, \mathbf{q}}\; &\mathbb{E}_{\mathbf r^i} [z(\mathbf r^i)] \label{eq: MainProblem} \tag{$\mathcal{P}$} \\
	\text{s.t.}\; &\sum_{j \in I} q_j(\mathbf r^j) + z(\mathbf r^i) \leq g(I \cup \{i+1\},\mathbf r^I) &\forall \mathbf r^i {\in \times_{\ell \in [i]} R_\ell}, \forall I \subseteq[i], \nonumber \\
    &\sum_{j \in I} q_j(\mathbf r^j) \leq g(I,\mathbf r^I) &\forall \mathbf r^i {\in \times_{\ell \in [i]} R_\ell}, \forall I \subseteq[i], \nonumber \\
	&\mathbb{E}_{\mathbf r^{j-1}}[q_j(\mathbf r^j)] = 0.5W_j^*(r_j) &\forall r_j {\in R_j}, \; \forall j \in [i], \nonumber \\ 
	&{{\mathbf{q}}}, z \geq 0. \nonumber
\end{align}
In this formulation we have suppressed the dependence of $g$ on $\hat{r}_{i+1}$. This is without loss because, as we are interested in proving $h_{i+1}(0.5 {\mathbf{W}^i}^*) \geq 0.5g(\{i+1\}, \hat r_{i+1})$, we can focus only on the reward profiles containing $\hat r_{i+1}$. Hence, we can simplify the notation of $g(I \cup \{i+1\},\mathbf r^I, \hat{r}_{i+1})$ by denoting it as $g(I \cup \{i+1\},\mathbf r^I)$. This is only a notational change to align with Section \ref{section: model}, which has no consequence on the rest of our arguments. Accordingly, we also drop the dependence of the expected value of $g$ on $\hat r_{i+1}$, which is denoted as follows:
\begin{align*}
    &g(I \cup \{i+1\}) = \mathbb{E}_{\mathbf r^I}[g(I \cup \{i+1\}, \mathbf r^{I})] &\forall I \subseteq [i].
\end{align*}

We have also restricted the expected allocation constraints to equality. This restriction does not change the value of $h_{i+1}(0.5{\mathbf{W}^i}^*)$, as the objective function is non-increasing in $\mathbf{q}$ variables. Next, we will define a series of other formulations and prove their relation to \eqref{eq: MainProblem} via a series of Lemmas. Eventually, we will end up with a model enabling our result via induction and simple feasibility arguments. 

We now define another model, \eqref{eq: EquivalentProblem}, that partitions $\mathbf{q}$ into two parts and further restricts one of them through a new constraint \eqref{eq: q''constraint}. Our next result will prove that solving \eqref{eq: EquivalentProblem} is equivalent to solving \eqref{eq: MainProblem}:
\begin{align}
    \max_{\mathbf{z}, \mathbf{q}', \mathbf{q}''}\; &\mathbb{E}_{\mathbf r^i} [z(\mathbf r^i)] \label{eq: EquivalentProblem} \tag{$\mathcal{P}'$}\\
	\text{s.t.}\; &\sum_{j \in I} \Big[q_j'(\mathbf r^j) + q_j''(\mathbf r^j) \Big] + z(\mathbf r^i) \leq g(I \cup \{i+1\},\mathbf r^I) &\forall \mathbf r^i {\in \times_{\ell \in [i]} R_\ell}, \forall I \subseteq[i], \nonumber\\
    &\sum_{j \in I} q_j''(\mathbf r^j) \leq g(I \cup \{i+1\},\mathbf r^I) - g(\{i+1\}) &\forall \mathbf r^i {\in \times_{\ell \in [i]} R_\ell}, \forall I \subseteq[i], \label{eq: q''constraint}\\
    &\sum_{j \in I} \Big[q_j'(\mathbf r^j) + q_j''(\mathbf r^j) \Big] \leq g(I,\mathbf r^I) &\forall \mathbf r^i {\in \times_{\ell \in [i]} R_\ell}, \forall I \subseteq[i],  \nonumber \\
	&\mathbb{E}_{\mathbf r^{j-1}} \Big[q_j'(\mathbf r^j) + q_j''(\mathbf r^j) \Big] = 0.5{W_j^*(r_j)} &\forall r_j {\in R_j}, \; \forall j \in [i], \nonumber \\ 
	&{{\mathbf{q}}}', {{\mathbf{q}}}'', z \geq 0. \nonumber
\end{align}

\begin{lemma}\label{lem: EquivalenceResult}
    If $g(I,\mathbf r^i)$ is a nonnegative and nondecreasing set function for all $\mathbf r^i {\in \times_{\ell \in [i]} R_\ell}$, and \eqref{eq: MainProblem} is feasible, then the formulations \eqref{eq: MainProblem} and \eqref{eq: EquivalentProblem} have the same optimal objective value.
\end{lemma}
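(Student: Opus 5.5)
The plan is to prove the two inequalities between the optimal values separately. Each direction comes from an explicit map between feasible solutions that leaves the objective $\mathbb{E}_{\mathbf r^i}[z(\mathbf r^i)]$ unchanged. Feasibility of \eqref{eq: MainProblem} is used only so that both problems have a well-defined (possibly equal) optimum to compare. Once we show \eqref{eq: EquivalentProblem} is feasible as well, the two values can be compared directly.

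\textbf{Direction 1: the optimum of \eqref{eq: EquivalentProblem} is at most that of \eqref{eq: MainProblem}.} Take any feasible $(z,\mathbf q',\mathbf q'')$ of \eqref{eq: EquivalentProblem} and set $\mathbf q=\mathbf q'+\mathbf q''$.
\begin{itemize}
\item The first, third and fourth constraints of \eqref{eq: EquivalentProblem} become, verbatim, the constraints of \eqref{eq: MainProblem} on $\mathbf q$.
\item Nonnegativity of $\mathbf q$ and $z$ is inherited.
\item Constraint \eqref{eq: q''constraint} is simply discarded.
\end{itemize}
So $(z,\mathbf q)$ is feasible for \eqref{eq: MainProblem} with the same objective value.

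\textbf{Direction 2: the optimum of \eqref{eq: MainProblem} is at most that of \eqref{eq: EquivalentProblem}.} Given a feasible $(z,\mathbf q)$ of \eqref{eq: MainProblem}, I will use the trivial split $\mathbf q'=\mathbf q$, $\mathbf q''=0$. All constraints of \eqref{eq: EquivalentProblem} other than \eqref{eq: q''constraint} are then exactly those of \eqref{eq: MainProblem}. It remains to check that \eqref{eq: q''constraint} holds with zero left-hand side, that is,
\[
g(I\cup\{i+1\},\mathbf r^I)\ \ge\ g(\{i+1\}) \qquad \forall I\subseteq[i],\ \forall \mathbf r^i .
\]

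\textbf{The main step: verifying this inequality.} The key observation is that $g(\{i+1\})$ is not really an expectation. By the on-line property, $g(\{i+1\},\cdot)$ depends only on $r_{i+1}$, which has been fixed at $\hat r_{i+1}$. Hence $g(\{i+1\})=\mathbb{E}_{\mathbf r^{\emptyset}}[g(\{i+1\})]=g(\{i+1\},\hat r_{i+1})$ is a constant.

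Now fix a full reward profile extending $(\mathbf r^I,\hat r_{i+1})$. Monotonicity of $g(\cdot,\mathbf r^n)$ gives $g(I\cup\{i+1\},\mathbf r^n)\ge g(\{i+1\},\mathbf r^n)$. The on-line property identifies the left side with $g(I\cup\{i+1\},\mathbf r^I)$ and the right side with $g(\{i+1\},\hat r_{i+1})$. This gives the inequality for every nonempty $I$. For $I=\emptyset$, both sides are equal, so \eqref{eq: q''constraint} reads $0\le 0$.

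Thus $(z,\mathbf q,0)$ is feasible for \eqref{eq: EquivalentProblem} with the same objective. In particular \eqref{eq: EquivalentProblem} is feasible, and the two optimal values coincide.

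The only delicate point is the bookkeeping above. It consists of checking that $g(\{i+1\})$ really is the deterministic quantity $g(\{i+1\},\hat r_{i+1})$, and that monotonicity may be applied pointwise in the reward profile through the on-line property. Nonnegativity of $g$ is not needed beyond guaranteeing that the zero allocation respects the remaining capacity constraints.
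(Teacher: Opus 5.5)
Your proposal is correct and is essentially the paper's proof. The maps $(z,\mathbf q)\mapsto(z,\mathbf q,\mathbf 0)$ and $(z,\mathbf q',\mathbf q'')\mapsto(z,\mathbf q'+\mathbf q'')$ preserve the objective, and monotonicity of $g$ makes the right-hand side of \eqref{eq: q''constraint} nonnegative; your check that $g(\{i+1\})=g(\{i+1\},\hat r_{i+1})$ is a constant, so that monotonicity applies pointwise through the on-line property, just spells out a step the paper states in one line.
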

\begin{proof}
    As \eqref{eq: MainProblem} and \eqref{eq: EquivalentProblem} have the same objective function over $z$, we prove the result by showing that their feasibility sets are essentially the same.

    Let $(\mathbf{z}, \mathbf{q})$ be a feasible solution of $\eqref{eq: MainProblem}$. Then, it is easy to see that $(\mathbf{z}, \mathbf{q}' = \mathbf{q}, \mathbf{q}''=0)$ is feasible in \eqref{eq: EquivalentProblem} and yields the same objective value as $(\mathbf{z}, \mathbf{q})$. This is because $g$ is nonnegative and monotone increasing for all $\mathbf r^i {\in \times_{\ell \in [i]} R_\ell}$, which means that the right-hand side of \eqref{eq: q''constraint} is nonnegative: $g(I \cup \{i+1\},\mathbf r^I) - g(\{i+1\}) \geq 0$ for all $I \subseteq[i]$ and for all $\mathbf r^i {\in \times_{\ell \in [i]} R_\ell}$.

    Now, let $(\mathbf{z}, \mathbf{q}', \mathbf{q}'')$ be a feasible solution of $\eqref{eq: EquivalentProblem}$. Then, $(\mathbf{z},\mathbf{q} = \mathbf{q}' + \mathbf{q}'')$ is feasible in \eqref{eq: MainProblem} and yields the same objective value as $(\mathbf{z}, \mathbf{q}', \mathbf{q}'')$. 

    Recall that we assume \eqref{eq: MainProblem} to be feasible, and it is bounded by definition. Then, as any solution of one problem (\eqref{eq: MainProblem} or \eqref{eq: EquivalentProblem}) can be recast as a solution of the other with the same objective value, they should have the same optimal objective value.
\end{proof}

Thanks to Lemma \ref{lem: EquivalenceResult}, we can work with \eqref{eq: EquivalentProblem} instead of \eqref{eq: MainProblem}. Next, we present another model that yields a lower bound for \eqref{eq: EquivalentProblem}:
\begin{align}
    \max_{\mathbf{q}', \mathbf{q}''}\; &\mathbb{E}_{\mathbf r^i} [g(\{i+1\}) - \sum_{j \in [i]} q_j'(\mathbf r^j)] \label{eq: LowerBound} \tag{$\mathcal{P}_\ell'$}\\
	\text{s.t.}\; &\sum_{j \in I} \Big[q_j'(\mathbf r^j) + q_j''(\mathbf r^j) \Big] \leq g(I \cup \{i+1\},\mathbf r^I) &\forall \mathbf r^i {\in \times_{\ell \in [i]} R_\ell}, \forall I \subseteq[i], \nonumber \\
    &\sum_{j \in I} q_j''(\mathbf r^j) \leq g(I \cup \{i+1\},\mathbf r^I) - g(\{i+1\}) &\forall \mathbf r^i {\in \times_{\ell \in [i]} R_\ell}, \forall I \subseteq[i], \tag{\ref{eq: q''constraint}}\\
    &\sum_{j \in I} \Big[q_j'(\mathbf r^j) + q_j''(\mathbf r^j) \Big] \leq g(I,\mathbf r^I) &\forall \mathbf r^i {\in \times_{\ell \in [i]} R_\ell}, \forall I \subseteq[i],  \nonumber \\
	&\mathbb{E}_{\mathbf r^{j-1}} \Big[q_j'(\mathbf r^j) + q_j''(\mathbf r^j) \Big] = 0.5{W_j^*(r_j)} &\forall r_j {\in R_j}, \; \forall j \in [i], \nonumber \\ 
	&{{\mathbf{q}}}', {{\mathbf{q}}}'' \geq 0. \nonumber
\end{align}

\begin{lemma}\label{lem: LowerBoundModel}
    The optimal solution of \eqref{eq: LowerBound} lower bounds the optimal solution of \eqref{eq: EquivalentProblem}.
\end{lemma}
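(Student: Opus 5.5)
The plan is a direct feasibility argument: from any feasible solution of \eqref{eq: LowerBound} I will build a feasible solution of \eqref{eq: EquivalentProblem} whose objective value is at least as large. Taking the optimum of \eqref{eq: LowerBound} then gives the claim. The variables $\mathbf q', \mathbf q''$ are kept unchanged, and the only work is to choose $z$ pointwise.

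Let $(\mathbf q', \mathbf q'')$ be feasible for \eqref{eq: LowerBound}. For each $\mathbf r^i$ I define
\begin{align*}
z(\mathbf r^i) = \max\Big\{0,\; g(\{i+1\}) - \sum_{j \in [i]} q_j'(\mathbf r^j)\Big\}.
\end{align*}
Here $g(\{i+1\})$ is a constant once $\hat r_{i+1}$ is fixed, because it is the expected value of $g$ over $\mathbf r^{\emptyset}$. This $z$ is nonnegative. The constraint \eqref{eq: q''constraint}, the constraint with $g(I,\mathbf r^I)$ on the right, the expectation constraints and the nonnegativity of $\mathbf q', \mathbf q''$ all appear verbatim in both programs, so they hold automatically. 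The only constraint left to check is the first one of \eqref{eq: EquivalentProblem}, which involves $z$. I check it separately for each $\mathbf r^i$ and each $I \subseteq [i]$, according to which term attains the maximum.

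\emph{Case $z(\mathbf r^i)=0$.} The required inequality is exactly the first constraint of \eqref{eq: LowerBound}, so it holds.

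\emph{Case $z(\mathbf r^i)=g(\{i+1\}) - \sum_{j\in[i]} q_j'(\mathbf r^j)$.} Since $\mathbf q' \geq 0$, we have $\sum_{j\in I} q'_j \leq \sum_{j\in[i]} q'_j$. Hence
\begin{align*}
\sum_{j \in I}\big[q_j' + q_j''\big] + z(\mathbf r^i) \leq \sum_{j\in I} q_j''(\mathbf r^j) + g(\{i+1\}) \leq g(I\cup\{i+1\},\mathbf r^I),
\end{align*}
where the last step is precisely \eqref{eq: q''constraint}.

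So $(z,\mathbf q',\mathbf q'')$ is feasible for \eqref{eq: EquivalentProblem}. Its objective value is $\mathbb{E}_{\mathbf r^i}[z(\mathbf r^i)]$, and since $z$ dominates $g(\{i+1\}) - \sum_{j \in [i]} q_j'(\mathbf r^j)$ pointwise, this value is at least the objective of \eqref{eq: LowerBound} at $(\mathbf q', \mathbf q'')$. Maximizing over the feasible set of \eqref{eq: LowerBound} gives the stated lower bound. If \eqref{eq: LowerBound} is infeasible the statement is vacuous.

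The main subtlety is that the natural choice $z = g(\{i+1\}) - \sum_{j\in[i]} q_j'$ can be negative. Truncating it at $0$ resolves this, because the first constraint of \eqref{eq: LowerBound} already supplies the $z=0$ case of the $z$-constraint in \eqref{eq: EquivalentProblem}.
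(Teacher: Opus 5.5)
Your proof is correct and follows essentially the same route as the paper: keep $(\mathbf q',\mathbf q'')$, attach a pointwise $z$, and use \eqref{eq: q''constraint} together with $\mathbf q'\geq 0$ to compare the two objectives. The only difference is the choice of $z$. The paper takes the largest admissible value $z(\mathbf r^i)=\min_{I\subseteq[i]}\{g(I\cup\{i+1\},\mathbf r^I)-\sum_{j\in I}[q_j'(\mathbf r^j)+q_j''(\mathbf r^j)]\}$ and lower-bounds it by $g(\{i+1\})-\sum_{j\in[i]}q_j'(\mathbf r^j)$, whereas you take the smaller explicit $\max\{0,\,g(\{i+1\})-\sum_{j\in[i]}q_j'(\mathbf r^j)\}$ and check feasibility directly, which has the minor advantage of making $z\geq 0$ explicit.
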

\begin{proof}
    If \eqref{eq: LowerBound} is infeasible, then \eqref{eq: EquivalentProblem} also does not have any feasible solution. In this case, we can say that their optimal solutions both yield $- \infty$, so the result follows. Hence, below, we treat the case of feasible \eqref{eq: LowerBound}.

    Any feasible solution, $(\mathbf{q}',\mathbf{q}'')$, of \eqref{eq: LowerBound}, can be coupled with some $\mathbf{z}$ to find a feasible solution of \eqref{eq: EquivalentProblem}. To maximize the objective, one can define:
    \begin{align}\label{eq: ConstructedZ}
        &z(\mathbf r^i) = \min_{I \subseteq [i]} \Big\{ g(I \cup \{i+1\},\mathbf r^I) - \sum_{j \in I} \big[q_j'(\mathbf r^j) + q_j''(\mathbf r^j) \big] \Big\} &\forall \mathbf r^i {\in \times_{\ell \in [i]} R_\ell}.
    \end{align}

    We now show that, given any feasible $(\mathbf{q}',\mathbf{q}'')$ in \eqref{eq: LowerBound}, the feasible solution of \eqref{eq: EquivalentProblem} constructed as in \eqref{eq: ConstructedZ} yields a weakly higher objective value. Hence, \eqref{eq: EquivalentProblem} should have an optimal objective value that is lower bounded by that of \eqref{eq: LowerBound}. To this end, consider the objective of \eqref{eq: EquivalentProblem} given by the equations in \eqref{eq: ConstructedZ} for any fixed $(\mathbf{q}',\mathbf{q}'')$:
    \begin{align*}
        \mathbb{E}_{\mathbf r^i}[z(\mathbf r^i)] &= \mathbb{E}_{\mathbf r^i}\Bigg[ \min_{I \subseteq [i]} \Big\{ g(I \cup \{i+1\},\mathbf r^I) - \sum_{j \in I} \big[q_j'(\mathbf r^j) + q_j''(\mathbf r^j) \big] \Big\} \Bigg] \\
        &\geq \mathbb{E}_{\mathbf r^i}\Bigg[ \min_{I \subseteq [i]} \Big\{ g(I \cup \{i+1\},\mathbf r^I) - \sum_{j \in I} q_j'(\mathbf r^j) \\
        & \qquad \qquad \qquad \qquad \qquad \qquad \qquad -\big[ g(I \cup \{i+1\},\mathbf r^I) - g(\{i+1\}) \big] \Big\} \Bigg]\\
        &\geq \mathbb{E}_{\mathbf r^i}\Bigg[ \min_{I \subseteq [i]} \Big\{ g( \{i+1\}) - \sum_{j \in I} q_j'(\mathbf r^j) \Big\} \Bigg] \geq  \mathbb{E}_{\mathbf r^i}\Bigg[ g( \{i+1\}) - \sum_{j \in [i]} q_j'(\mathbf r^j) \Bigg], \\
    \end{align*}
    where the first inequality follows from \eqref{eq: q''constraint} for any $(\mathbf{q}',\mathbf{q}'')$ feasible in \eqref{eq: LowerBound}, and the last inequality follows as feasible $\mathbf{q}'$ are nonnegative. Hence, given any feasible solution of \eqref{eq: LowerBound}, one can find a feasible solution of \eqref{eq: EquivalentProblem} that yields a weakly higher objective value.
\end{proof}

We now consider a relaxation of \eqref{eq: LowerBound} and prove that it has the same optimal solution as \eqref{eq: LowerBound}:
\begin{align}
    \max_{\mathbf{q}', \mathbf{q}''}\; &\mathbb{E}_{\mathbf r^i} [g(\{i+1\}) - \sum_{j \in [i]} q_j'(r^j)] \label{eq: Relaxation} \tag{$\mathcal{P}_r'$}\\
	\text{s.t.}\; &\sum_{j \in I} \Big[q_j'(\mathbf r^j) + q_j''(\mathbf r^j) \Big] \leq g(I \cup \{i+1\},\mathbf r^I) &\forall \mathbf r^i {\in \times_{\ell \in [i]} R_\ell}, \forall I \subseteq[i], \nonumber \\
    &\sum_{j \in I} q_j''(\mathbf r^j) \leq g(I \cup \{i+1\},\mathbf r^I) - g(\{i+1\}) &\forall \mathbf r^i {\in \times_{\ell \in [i]} R_\ell}, \forall I \subseteq[i], \tag{\ref{eq: q''constraint}}\\
	&\mathbb{E}_{\mathbf r^{j-1}} \Big[q_j'(\mathbf r^j) + q_j''(\mathbf r^j) \Big] = 0.5{W_j^*(r_j)} &\forall r_j {\in R_j}, \; \forall j \in [i], \nonumber \\ 
	&{{\mathbf{q}}}', {{\mathbf{q}}}'' \geq 0. \nonumber
\end{align}

\begin{lemma}\label{lem: equivalentLowerBound}
    If \eqref{eq: MainProblem} is feasible, then \eqref{eq: LowerBound} and \eqref{eq: Relaxation} have the same optimal solution.
\end{lemma}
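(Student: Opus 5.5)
The plan is to prove the two optimal values equal by proving two inequalities. One direction is immediate: \eqref{eq: Relaxation} is obtained from \eqref{eq: LowerBound} by deleting the constraints $\sum_{j \in I} [q_j'(\mathbf r^j) + q_j''(\mathbf r^j)] \leq g(I,\mathbf r^I)$, so its optimal value is at least that of \eqref{eq: LowerBound}. For the reverse inequality, I would take an optimal $(\mathbf q',\mathbf q'')$ of \eqref{eq: Relaxation} and modify it into a feasible point of \eqref{eq: LowerBound} whose objective is no smaller. Feasibility of \eqref{eq: MainProblem} is what guarantees that \eqref{eq: Relaxation} is feasible and bounded (its objective is at most $g(\{i+1\})$ because $\mathbf q' \geq 0$), so an optimum exists.

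The first step is to rewrite the objective. Write $t_j = q_j' + q_j''$. The expectation constraints fix $\mathbb{E}_{\mathbf r^j}[t_j(\mathbf r^j)] = 0.5\,\mathbb{E}_{r_j}[W_j^*(r_j)]$. Hence the objective of both programs equals
\[
g(\{i+1\}) - \sum_{j \in [i]} 0.5\,\mathbb{E}[W_j^*(r_j)] + \mathbb{E}_{\mathbf r^i}\Big[\sum_{j\in[i]} q_j''(\mathbf r^j)\Big].
\]
So both problems amount to choosing a ``total'' allocation $\mathbf t$ with the prescribed interim values and a sub-allocation $0 \leq \mathbf q'' \leq \mathbf t$ obeying \eqref{eq: q''constraint}, so as to maximize $\mathbb{E}[\sum_j q_j'']$. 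The two programs differ only in the polymatroid constraints imposed on $\mathbf t$.

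The second step records two structural facts. First, by submodularity with $I$ and $\{i+1\}$, together with $g(\emptyset)\geq 0$,
\[
g(I\cup\{i+1\},\mathbf r^I) - g(\{i+1\}) \leq g(I,\mathbf r^I).
\]
I would state this pointwise in $\mathbf r^I$ with $g(\{i+1\})$ read at the fixed $\hat r_{i+1}$, exactly as the paper already uses it in Lemma \ref{lem: LowerBoundModel}. Consequently any $\mathbf q''$ satisfying \eqref{eq: q''constraint} satisfies $\sum_{j\in I} q_j'' \leq g(I,\mathbf r^I)$ on its own. Second, since \eqref{eq: MainProblem} is feasible, there is an allocation $\mathbf q^0$ with the correct interim values that satisfies both polymatroid families.

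The third step constructs the repair. Given the optimal $(\mathbf t,\mathbf q'')$ of \eqref{eq: Relaxation}, I keep $\mathbf q''$ unchanged, which preserves the objective and \eqref{eq: q''constraint}. I then look for a new total $\hat{\mathbf t} \geq \mathbf q''$ with the same interim values and with $\sum_{j\in I}\hat t_j \leq \min\{g(I,\mathbf r^I), g(I\cup\{i+1\},\mathbf r^I)\}$ for every $I$ and every profile. Finding this $\hat{\mathbf t}$ is the main obstacle. The natural approach mimics Lemma \ref{lemma:imp}: the existence of $\hat{\mathbf t}$ is an LP feasibility question. By Farkas' lemma, infeasibility would produce dual multipliers on the polymatroid constraints and on the interim equalities. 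Because the constraint polytope at each profile, shifted by $\mathbf q''$, is again polymatroid-like, I would try to reduce these multipliers, by the usual uncrossing of tight sets, to a chain of sets at each profile. I would then derive a contradiction by comparing against the two feasible points $\mathbf t$ and $\mathbf q^0$, using the submodular inequality from the second step to show the residual capacity $g(I,\mathbf r^I) - \sum_{I} q_j''$ is never the binding obstruction. As a first concrete attempt before the duality argument, I would try the explicit candidate $\hat t_j = q_j'' + (q_j^0 - q_j'')^+$ corrected by a convex combination with $\mathbf t$ to restore the interim equalities, and check the polymatroid constraints directly.
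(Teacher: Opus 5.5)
\textbf{Gap: the existence of $\hat{\mathbf t}$ is never proved, and your listed ingredients cannot prove it.}
Your frame is right and matches the paper's. Fixing an optimal $\mathbf q''$ of $(\mathcal P_r')$ and looking for $\hat{\mathbf t}=\mathbf q'+\mathbf q''$ is exactly the paper's residual system in $\mathbf q'$, and your easy direction and objective rewriting are fine. But you stop at the one step that carries the content, and the tools you name cannot supply it. Feasibility of $\mathbf t$, the point $\mathbf q^0$, the bound $\hat g\le g$ and per-profile uncrossing are available for \emph{every} $\mathbf q''$ feasible in $(\mathcal P_r')$. For such $\mathbf q''$ the repair can be impossible.

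Take $i=2$ and let $g$ be the rank function on $\{1,2,3\}$ with $1,2$ parallel and $3$ a coloop. So $g(I)=1$ for $\emptyset\ne I\subseteq\{1,2\}$ and for $I=\{3\}$, and $g(I\cup\{3\})=2$ for $\emptyset\ne I\subseteq\{1,2\}$. Let $r_1\in\{10,0\}$ with probabilities $\frac14,\frac34$, $r_2=5$, $r_3=1$. Then $0.5W_1^*(10)=\frac12$, $0.5W_1^*(0)=0$ and $0.5W_2^*(5)=\frac38$. $(\mathcal P)$ is feasible, using $q_2(10,5)=0$ and $q_2(0,5)=\frac12$.

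Now take $q_1''\equiv 0$, $q_2''(10,5)=1$, $q_2''(0,5)=0$, with totals $\mathbf t=(\frac12,1)$ at $(10,5)$ and $(0,\frac16)$ at $(0,5)$. This is feasible in $(\mathcal P_r')$. Yet every $\hat{\mathbf t}\ge\mathbf q''$ with the prescribed interims has $\hat t_1+\hat t_2\ge\frac32>g(\{1,2\})$ at $(10,5)$.

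Consequences for your plan:
\begin{itemize}
\item The residual capacity $g(I,\mathbf r^I)-\sum_{j\in I}q_j''$ \emph{can} be the binding obstruction, contrary to what you intend to show.
\item Your explicit candidate $\max(\mathbf q'',\mathbf q^0)$ equals $(\frac12,1)$ at $(10,5)$ and fails; no convex correction can rescue it.
\end{itemize}
What saves the lemma in this instance is optimality. An optimal $\mathbf q''$ has $\mathbb E[q_1''+q_2'']=\frac12=\mathbb E[t_1+t_2]$, which forces $\mathbf q''=\mathbf t$ and hence $t_1+t_2\le 1$ in every profile. Your sketch never says how the optimality of $\mathbf q''$ enters, and without it the step is false.

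\textbf{The paper's device does not close this gap either.}
The paper bridges the step with a device you do not have. It normalizes the Farkas certificate so that every dual inequality \eqref{eq: FarkasConstraint} holds with equality. The $\mathbf q''^*$-terms then cancel, and the certificate becomes one for infeasibility of $(\mathcal P)$.

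That device also uses nothing about $\mathbf q''^*$ beyond feasibility in $(\mathcal P_r')$, so the same example shows it cannot hold in that generality. In the example no tight certificate with negative value exists, since it would certify infeasibility of the feasible $(\mathcal P)$. Concretely, start from $\sigma_{\{1,2\}}(10,5)=1$ and $\lambda_1(10)=-1$. This has value $-\frac12$ in the strict Farkas inequality and slack $1$ at $j=2$, $\mathbf r^2=(10,5)$. The Case~3 shift from $\sigma_{\{1,2\}}$ to $\sigma_{\{1\}}$ raises that value by $\varepsilon\,\bigl(g(\{1\})-g(\{1,2\})+q_2''(10,5)\bigr)=\varepsilon$. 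Strictness is therefore lost (at $\varepsilon=\frac12$) while the slack is still $\frac12$. The slack-minimizing certificate is sought over a set that is not closed, and it need not exist.

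So importing the paper's normalization will not close your gap. A complete argument must make the optimality of $\mathbf q''$, or the freedom to choose among optimal $\mathbf q''$, do real work.
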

\begin{proof}
    Assume to the contrary that \eqref{eq: MainProblem} is feasible, but \eqref{eq: Relaxation} has an optimal solution that is strictly higher than that of \eqref{eq: LowerBound}. Let $(\mathbf{q}'^*,\mathbf{q}''^*)$ denote the optimal solution of \eqref{eq: Relaxation}, which cannot be feasible in \eqref{eq: LowerBound}. Then the following system of equations, where we utilize the values of $\mathbf{q}''^*$, should be infeasible:
    \begin{align*}
    &\sum_{j \in I} q_j'(\mathbf r^j)  \leq g(I \cup \{i+1\},\mathbf r^I) - \sum_{j \in I} q_j''^*(\mathbf r^j) &\forall \mathbf r^i {\in \times_{\ell \in [i]} R_\ell}, \forall I \subseteq[i], \\
    &\sum_{j \in I} q_j'(\mathbf r^j) \leq g(I,\mathbf r^I) - \sum_{j \in I} q_j''^*(\mathbf r^j) &\forall \mathbf r^i {\in \times_{\ell \in [i]} R_\ell}, \forall I \subseteq[i],  \nonumber \\
	&\mathbb{E}_{\mathbf r^{j-1}} \Big[q_j'(\mathbf r^j)\Big] = 0.5{W_j^*(r_j)} - \mathbb{E}_{\mathbf r^{j-1}} \Big[ q_j''^*(\mathbf r^j) \Big] &\forall r_j {\in R_j}, \; \forall j \in [i], \nonumber \\ 
	&{{\mathbf{q}}}' \geq 0. \nonumber
    \end{align*}
Then, via Farkas' Lemma, we know that the following system is feasible:
\begin{align}
    &\sum_{I \subseteq[i]} \sum_{\mathbf r^i {\in \times_{\ell \in [i]} R_\ell}} \Bigg[ \delta_I(\mathbf r^i) \Big( g(I \cup \{i+1\},\mathbf r^I) - \sum_{j \in I} q_j''^*(\mathbf r^j) \Big) \label{eq: firstFarkas'alternative} \tag{$\mathcal{F}$}  \\
    &\hspace{3.2cm}+ \sigma_I(\mathbf r^i) \Big( g(I,\mathbf r^I) - \sum_{j \in I} q_j''^*(\mathbf r^j) \Big) \Bigg] \nonumber \\
    &\hspace{3.6cm} + \sum_{j \in [i]} \sum_{r_j \in R_j} \lambda_j(r_j) \Bigg[ 0.5{W_j^*(r_j)} - \mathbb{E}_{\mathbf r^{j-1}}[q_j''^*(\mathbf r^j)]\Bigg]<0, \label{eq:  FarkasConstraintStrict} \\
    &\sum_{I \ni j} \sum_{(r_{j+1},\ldots,r_i)} \big[ \delta_I(\mathbf r^i) + \sigma_I(\mathbf r^i) \big] + f_{\mathbf r^{j-1}} \lambda_j(r_j) \geq 0 \qquad \forall \mathbf r^j \in \times_{\ell \in [j]} R_\ell, \; \forall j \in [i], \label{eq: FarkasConstraint}\\
    &\boldsymbol{\delta}, \boldsymbol{\sigma} \geq 0. \nonumber
\end{align}
We denote the above model by \eqref{eq: firstFarkas'alternative}.
Note that, in \eqref{eq: FarkasConstraintStrict}, all variables have nonnegative coefficients due to the fact that ${\mathbf{q}''}^*$ is feasible in \eqref{eq: Relaxation}.

We next show that one can find a feasible $(\boldsymbol{\delta},\boldsymbol{\sigma},\boldsymbol{\lambda})$ in \eqref{eq: firstFarkas'alternative} that always satisfies \eqref{eq: FarkasConstraint} with equality. Let $(\boldsymbol{\delta},\boldsymbol{\sigma},\boldsymbol{\lambda})$ denote the feasible solution of the Farkas' alternative that satisfies $\sum_{I \ni j} \sum_{\mathbf r^i} [ \delta_I(\mathbf r^i) + \sigma_I(\mathbf r^i) ] = 1$ and minimizes the sum of all slacks of \eqref{eq: FarkasConstraint}. Note that we can always normalize the value of $\sum_{I \ni j} \sum_{\mathbf r^i} [ \delta_I(\mathbf r^i) + \sigma_I(\mathbf r^i) ]$ to be one, {as any feasible $(\boldsymbol{\delta},\boldsymbol{\sigma},\boldsymbol{\lambda})$ should satisfy $\sum_{I \ni j} \sum_{\mathbf r^i} [ \delta_I(\mathbf r^i) + \sigma_I(\mathbf r^i) ]>0$. If not, as $\boldsymbol{\delta}$ and $\boldsymbol{\sigma}$ are constrained to be nonnegative, we must have $\delta_I(\mathbf r^i) = \sigma_I(\mathbf r^i) = 0$ for all $\mathbf r^i \in \times_{\ell \in [i]} R_\ell$ and $I \subseteq [i]$. Then, constraint \eqref{eq: FarkasConstraint} implies that $\lambda_j(r_j) \geq 0$ for all $r_j \in R_j$ and $j \in [i]$. As the coefficients of $\boldsymbol{\lambda}$ in \eqref{eq: FarkasConstraintStrict} are always nonnegative, we arrive at the conclusion that $(\boldsymbol{\delta},\boldsymbol{\sigma},\boldsymbol{\lambda})$ cannot satisfy \eqref{eq: FarkasConstraintStrict} together with $\boldsymbol{\delta} = \boldsymbol{\sigma}=0$. Hence, any feasible $(\boldsymbol{\delta},\boldsymbol{\sigma},\boldsymbol{\lambda})$ should satisfy $\sum_{I \ni j} \sum_{\mathbf r^i} [ \delta_I(\mathbf r^i) + \sigma_I(\mathbf r^i) ] > 0$, which can be normalized to one.}

Assume to the contrary that the normalized solution, $(\boldsymbol{\delta},\boldsymbol{\sigma},\boldsymbol{\lambda})$, that minimizes the sum of all slacks of \eqref{eq: FarkasConstraint} leaves a strictly positive slack for some $\hat{j} \in [i]$ and $\hat{\mathbf r}^{\hat j} \in \times_{\ell \in [\hat j]} R_\ell$:
\begin{align*}
    \sum_{I \ni \hat j} \sum_{(r_{\hat j+1},\ldots,r_i)} \big[ \delta_I(\hat{\mathbf r}^{\hat j}, r_{\hat j+1},\ldots,r_i) + \sigma_I(\hat{\mathbf r}^{\hat j},r_{\hat j+1},\ldots,r_i) \big] + f_{\mathbf r^{\hat j-1}} \lambda_{\hat j}(r_{\hat j}) > 0.
\end{align*}

\textbf{Case 1 ($\lambda_{\hat j}(r_{\hat j}) > 0$).} In this case, as the other variables, $\boldsymbol{\delta}$ and $\boldsymbol{\sigma}$, are nonnegative, any constraint \eqref{eq: FarkasConstraint} that contains $\lambda_{\hat j}(r_{\hat j})$ must have a strictly positive slack. Hence, as the coefficients of $\boldsymbol \lambda$ in \eqref{eq: FarkasConstraintStrict} are nonnegative, we can strictly decrease $\lambda_{\hat j}(r_{\hat j})$ to arrive at another feasible solution that minimizes the sum of slacks of \eqref{eq: FarkasConstraint}. This would contradict the definition of $(\boldsymbol{\delta},\boldsymbol{\sigma},\boldsymbol{\lambda})$. Hence, we must have $\lambda_{\hat j}(r_{\hat j}) \leq 0$ and $\delta_{\hat I}(\hat{\mathbf r}^{\hat j}, \hat r_{\hat j+1},\ldots,\hat r_i) + \sigma_{\hat I}(\hat{\mathbf r}^{\hat j},\hat r_{\hat j+1},\ldots,\hat r_i) > 0$ for some $(\hat r_{\hat j+1},\ldots,\hat r_i)$ and $\hat I \ni \hat j$.

\textbf{Case 2 ($|\hat I| = 1$ and $\delta_{\hat I}(\hat{\mathbf r}^{\hat j}, \hat r_{\hat j+1},\ldots,\hat r_i)>0$).} In this case, we can construct another feasible solution by strictly decreasing $\delta_{\hat I}(\hat{\mathbf r}^{\hat j}, \hat r_{\hat j+1},\ldots,\hat r_i)$. 
Note that, as $|\hat I| = 1$ and $\hat I \ni \hat j$, we must have $\hat I = \{\hat j\}$, so that $\delta_{\hat I}(\hat{\mathbf r}^{\hat j}, \hat r_{\hat j+1},\ldots,\hat r_i)$ appears in \eqref{eq: FarkasConstraint} only for $\hat j$ and $\hat{\mathbf r}^{\hat j}$, which is assumed to have strictly positive slack.
As any $\boldsymbol{\delta}$ has a nonnegative coefficient in \eqref{eq: FarkasConstraintStrict}, there must exist a small perturbation that maintains feasibility and strictly decreases the sum of all slacks of \eqref{eq: FarkasConstraint}, contradicting the definition of $(\boldsymbol{\delta},\boldsymbol{\sigma},\boldsymbol{\lambda})$.

\textbf{Case 3 ($|\hat I| > 1$ and $\delta_{\hat I}(\hat{\mathbf r}^{\hat j}, \hat r_{\hat j+1},\ldots,\hat r_i)>0$).}  In this case, we can define another feasible solution, $(\boldsymbol{\delta}',\boldsymbol{\sigma},\boldsymbol{\lambda})$ as follows:
\begin{align*}
    \delta_J'({\mathbf{r}}^i) &= \begin{cases}
        \delta_J({\mathbf{r}}^i) - \varepsilon &\text{if } J = \hat I, \text{ and } {\mathbf{r}}^i = \hat{\mathbf r}^i,\\
        \delta_J({\mathbf{r}}^i) + \varepsilon &\text{if } J = \hat I \setminus \{\hat j\}, \text{ and } {\mathbf{r}}^i = \hat{\mathbf r}^i,\\
        \delta_J({\mathbf{r}}^i) &\text{otherwise,}
    \end{cases}
    &\forall J \in [i], \forall \mathbf r^i {\in \times_{\ell \in [i]} R_\ell},
\end{align*}
for some $\varepsilon>0$.
As $(\boldsymbol{\delta},\boldsymbol{\sigma},\boldsymbol{\lambda})$ satisfies \eqref{eq: FarkasConstraintStrict} with a strict inequality, choosing $\varepsilon$ small enough will ensure that $(\boldsymbol{\delta}',\boldsymbol{\sigma},\boldsymbol{\lambda})$ also satisfies \eqref{eq: FarkasConstraintStrict}. As $\delta_{\hat I}(\hat{\mathbf r}^{\hat j}, \hat r_{\hat j+1},\ldots,\hat r_i)>0$, the nonnegativity of $\boldsymbol{\delta}'$ can also be ensured by choosing $\varepsilon>0$ accordingly. Lastly, we show that $(\boldsymbol{\delta}',\boldsymbol{\sigma},\boldsymbol{\lambda})$ also satisfies \eqref{eq: FarkasConstraint} for all $k \in [i]$ and ${\mathbf{r}}^k \in \times_{\ell \leq k} R_\ell$. Notice that the left-hand side of \eqref{eq: FarkasConstraint} is not affected by $\varepsilon$ if $k \not \in \hat I$ or $\mathbf r^k \neq \hat{\mathbf{r}}^k$. For $k \in \hat I \setminus \{\hat j\}$ and $\mathbf r^k = \hat{\mathbf{r}}^k$, the perturbation, $\varepsilon$, cancels out. For $\hat j$ and $\hat{\mathbf r}^{\hat j}$, we end up with a strictly decreased slack. Hence, we arrive at a contradiction. For the case of $\sigma_{\hat I}(\hat{\mathbf r}^{\hat j},\hat r_{\hat j+1},\ldots,\hat r_i)>0$, one can replicate the arguments in Cases 2 and 3 by defining another feasible solution $(\boldsymbol{\delta},\boldsymbol{\sigma}',\boldsymbol{\lambda})$ in a similar way. Hence, we establish that one can find a normalized feasible solution of \eqref{eq: firstFarkas'alternative} that satisfies \eqref{eq: FarkasConstraint} always with an equality.

Then, we can write the following:
\begin{multline*}
    \sum_{I \subseteq [i]} \sum_{\mathbf r^i {\in \times_{\ell \in [i]} R_\ell}} \Bigg[ \Big( \delta_I(\mathbf r^i) + \sigma_I(\mathbf r^i) \Big) \sum_{j \in I} q_j''^*(\mathbf r^j) \Bigg] \\
    = \sum_{j \in [i]} \sum_{\mathbf r^j {\in \times_{\ell \in [j]} R_\ell}}q_j''^*(\mathbf r^j) \Bigg[ \sum_{I \ni j} \sum_{(r_{j+1},\ldots,r_i)} \delta_I(\mathbf r^i) + \sigma_I(\mathbf r^i) \Bigg].
\end{multline*}
Using the fact that $(\boldsymbol{\delta},\boldsymbol{\sigma},\boldsymbol{\lambda})$ satisfies \eqref{eq: FarkasConstraint} always with equality, we deduce that:
\begin{align*}
    \sum_{I \subseteq [i]} \sum_{\mathbf r^i} \Bigg[ \Big( \delta_I(\mathbf r^i) + \sigma_I(\mathbf r^i) \Big) \sum_{j \in I} q_j''^*(\mathbf r^j) \Bigg] &= \sum_{j \in [i]} \sum_{\mathbf r^j {\in \times_{\ell \in [j]} R_\ell}}q_j''^*(\mathbf r^j)\Bigg[ - f_{r^{j-1}} \lambda_j(r_j) \Bigg]\\
    &= - \sum_{j \in [i]} \sum_{r_j \in R_j} \lambda_j(r_j) \sum_{\mathbf r^{j-1}} f_{\mathbf r^{j-1}}q_j''^*(\mathbf r^j) \\
    & = - \sum_{j \in [i]} \sum_{r_j \in R_j} \lambda_j(r_j) \mathbb{E}_{\mathbf r^{j-1}}[q_j''^*(\mathbf r_j)].
\end{align*}
Then, the first constraint, \eqref{eq: FarkasConstraintStrict}, of the Farkas' alternative implies:
\begin{multline}\label{eq: FarkasConstraintStrictTwo}
    \sum_{I \subseteq[i]} \sum_{\mathbf r^i {\in \times_{\ell \in [i]} R_\ell}} \Bigg[ \delta_I(\mathbf r^i) g(I \cup \{i+1\},\mathbf r^I) + \sigma_I(\mathbf r^i) g(I,\mathbf r^I) \Bigg] \\
    + \sum_{j \in [i]} \sum_{r_j \in R_j} \lambda_j(r_j) 0.5{W_j^*(r_j)} <0.
\end{multline}
That is, there exists $(\boldsymbol{\delta},\boldsymbol{\sigma},\boldsymbol{\lambda})$ that has nonnegative $(\boldsymbol{\delta},\boldsymbol{\sigma})$ and satisfies \eqref{eq: FarkasConstraintStrictTwo} together with \eqref{eq: FarkasConstraint}. Due to Farkas' Lemma, this latter system, (\eqref{eq: FarkasConstraintStrictTwo}, \eqref{eq: FarkasConstraint}, and nonnegative $(\boldsymbol{\delta},\boldsymbol{\sigma})$), implies the infeasibility of the following one:
\begin{align*}
	&\sum_{j \in I} q_j(\mathbf r^j) \leq g(I \cup \{i+1\},\mathbf r^I) &\forall \mathbf r^i {\in \times_{\ell \in [i]} R_\ell}, \forall I \subseteq[i],\\
    &\sum_{j \in I} q_j(\mathbf r^j) \leq g(I,\mathbf r^I) &\forall \mathbf r^i {\in \times_{\ell \in [i]} R_\ell}, \forall I \subseteq[i],  \\
	&\mathbb{E}_{\mathbf r^{j-1}}[ q_j(\mathbf r^j)] = 0.5{W_j^*(r_j)} &\forall r_j {\in R_j}, \; \forall j \in [i], \\ 
	&{{\mathbf{q}}} \geq 0, \nonumber
\end{align*}
which means that \eqref{eq: MainProblem} is infeasible. This is a contradiction; therefore, \eqref{eq: LowerBound} and \eqref{eq: Relaxation} must have the same optimal solution.
\end{proof}

Through Lemmas \ref{lem: EquivalenceResult}, \ref{lem: LowerBoundModel}, and \ref{lem: equivalentLowerBound}, we prove that a relatively easier model, \eqref{eq: Relaxation}, can be solved to obtain a lower bound for \eqref{eq: MainProblem}. Our main result in this section leverages this fact in order to prove the prophet inequality of $1/2$ for on-line Polymatroids.

\begin{thrm}\label{thrm: onlinePolymatroidMainResult}
    If $g$ is a nonnegative, nondecreasing, and on-line submodular set function, then the optimal on-line strategy achieves at least $\frac{1}{2}$ of the prophet's value.
\end{thrm}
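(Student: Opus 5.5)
The proof goes by induction on $i$, combined with Proposition \ref{prop:oneConstrained} and Lemma \ref{lemma:imp}: it suffices to show that, for every $i\in[n-1]$ and every fixed $\hat r_{i+1}\in R_{i+1}$,
\[
h_{i+1}(0.5{\mathbf W^i}^*) \geq 0.5\, g(\{i+1\}) \geq 0.5\, W^*_{i+1}(\hat r_{i+1}).
\]

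\textbf{Second inequality and base case.} The second inequality comes straight from the off-line problem. For every profile containing $\hat r_{i+1}$, the prophet's allocation satisfies $w_{i+1}\le g(\{i+1\},\hat r_{i+1})$, and taking expectations gives $W^*_{i+1}(\hat r_{i+1})\le g(\{i+1\})$. For the base case I would note that $0.5{\mathbf W^1}^*$ is trivially implementable. The induction hypothesis, via Lemma \ref{lemma:imp}, then guarantees that \eqref{eq: MainProblem} is feasible at step $i+1$. Feasibility is exactly what is needed to apply Lemmas \ref{lem: EquivalenceResult}, \ref{lem: LowerBoundModel} and \ref{lem: equivalentLowerBound}. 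Chaining them gives $h_{i+1}(0.5{\mathbf W^i}^*)=\text{val}\eqref{eq: EquivalentProblem}\ge \text{val}\eqref{eq: LowerBound}=\text{val}\eqref{eq: Relaxation}$. So the task reduces to exhibiting a single feasible pair $(\mathbf q',\mathbf q'')$ for \eqref{eq: Relaxation} with
\[
\mathbb E\Big[\textstyle\sum_{j\in[i]} q'_j\Big]\le 0.5\, g(\{i+1\}).
\]

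\textbf{Construction.} Let $\mathbf q$ be a feasible ex-post allocation for \eqref{eq: MainProblem}; it has interim values $0.5{\mathbf W^j}^*$ and lies in the on-line polymatroid. I would split $\mathbf q$ greedily along the order $j=1,\dots,i$:
\[
q''_j(\mathbf r^j)=\min\Big\{q_j(\mathbf r^j),\; \big[g([j]\cup\{i+1\},\mathbf r^{[j]})-g(\{i+1\})\big]-\textstyle\sum_{\ell<j}q''_\ell(\mathbf r^\ell)\Big\},\qquad q'_j=q_j-q''_j .
\]
This is non-anticipative, since it depends only on $\mathbf r^j$ by the on-line property of $g$. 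Checking \eqref{eq: q''constraint} for every $I\subseteq[i]$, and not just for prefixes, is the first point needing care. I would handle it with the standard polymatroid greedy argument: by submodularity, $I\mapsto g(I\cup\{i+1\})-g(\{i+1\})$ is again a nondecreasing submodular function. The first constraint of \eqref{eq: Relaxation} holds because $q'+q''=q$ and $q$ satisfies $\sum_{j\in I}q_j\le g(I)\le g(I\cup\{i+1\})$ by monotonicity. The expectation constraints are inherited from $\mathbf q$.

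\textbf{Main obstacle.} The hard step is bounding $\mathbb E[\sum_j q'_j]$ by $0.5\,g(\{i+1\})$. Pointwise, $q'$ is positive only on the portion of the allocation that overflows the contracted polymatroid. Using $\sum_{j\le i} q_j\le g([i])$ pointwise, one gets $\sum_j q'_j \le g([i])-\big(g([i]\cup\{i+1\})-g(\{i+1\})\big)$ whenever the greedy cap binds. By submodularity this is at most $g(\{i+1\})$, but that only gives a factor $1$, not $1/2$.

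To get the factor $1/2$, I would use that $\mathbb E[q_j]$ is half the prophet's interim allocation. The idea is that the prophet's allocation $\mathbf w^*$ on items $[i]$, together with item $i+1$, must fit in $g$. So the prophet's allocation restricted to $[i]$ leaves room $g(\{i+1\})$-worth only if it was itself contractible. Halving then leaves at most half of $g(\{i+1\})$ of overflow in expectation.

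If this coupling argument proves too delicate, my fallback is a Farkas/duality argument on \eqref{eq: Relaxation}, mirroring the proof of Lemma \ref{lem: equivalentLowerBound}. I would assume the optimal value is below $0.5\,g(\{i+1\})$, extract dual multipliers $(\boldsymbol\delta,\boldsymbol\sigma,\boldsymbol\lambda)$ normalized to make \eqref{eq: FarkasConstraint} tight, and then contradict off-line feasibility of $\mathbf w^*$ in expectation. I expect this last step to be the crux of the proof.
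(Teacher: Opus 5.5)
\textbf{There is a genuine gap at the crux.} Your reduction to exhibiting a feasible pair for \eqref{eq: Relaxation} with $\mathbb E[\sum_j q'_j]\le \tfrac12 g(\{i+1\})$ is the same as the paper's. But that inequality is the entire content of the theorem, and you leave it unproved: the coupling remark is a heuristic and the Farkas fallback is only announced.

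Moreover, the route you commit to, greedily splitting an \emph{arbitrary} feasible on-line $\mathbf q$, fails in general. Write $q(A)=\sum_{j\in A}q_j$ and $\hat g(A)=g(A\cup\{i+1\})-g(\{i+1\})$. For any maximal split, $\sum_j q'_j=\max_{A\subseteq[i]}\{q(A)-\hat g(A)\}$. This is a convex function of $\mathbf q(\mathbf r^i)$. The interim constraints fix only conditional means, and an on-line $\mathbf q$ may correlate its coordinates through earlier rewards.

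For a concrete failure, take the graphic matroid of $K_4$ minus the edge $cd$. Let the items arrive as follows: first a loop, whose reward acts as a uniform public coin $U$; then $ac,bc,ad,bd$ with i.i.d.\ continuous rewards; then $i+1=ab$ with a smaller reward. The prophet takes the three best cycle edges, so $a(r):=\tfrac12 W^*(r)\le\tfrac12$. Set $q_{ac}=\mathds{1}[U<a(r_{ac})]$, $q_{bc}=\mathds{1}[U<a(r_{bc})]$, $q_{ad}=\mathds{1}[\tfrac12\le U<\tfrac12+a(r_{ad})]$ and $q_{bd}=\mathds{1}[\tfrac12\le U<\tfrac12+a(r_{bd})]$. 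This $\mathbf q$ is feasible, since it never selects more than two cycle edges, and it has the right interims.

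Contracting $ab$ turns $\{ac,bc\}$ and $\{ad,bd\}$ into parallel pairs. The overflow is therefore $\mathds{1}[q_{ac}=q_{bc}=1]+\mathds{1}[q_{ad}=q_{bd}=1]$. Its expectation is $2\,\mathbb E[\min\{a(r_1),a(r_2)\}]=3/5>\tfrac12=\tfrac12 g(\{ab\})$ in the continuous limit, and a fine discretization keeps it above $\tfrac12$.

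Separately, your prefix-only cap is not the polymatroid greedy. In this example it sets $q''_{ad}=q''_{bd}=1$, which violates \eqref{eq: q''constraint} for $I=\{ad,bd\}$. The cap must be a minimum over all $A\subseteq[j]$ containing $j$. Even with that fixed, the factor $\tfrac12$ cannot be pushed through a nonlinear split of an on-line allocation.

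\textbf{The missing idea} is to split the \emph{prophet's} allocation rather than an on-line one, and then build the two on-line pieces separately, so that $\tfrac12$ enters only through linear interim quantities. The function $\hat g$ above is again nonnegative, nondecreasing and on-line submodular. Let $\mathbf w^*$ and $\hat{\mathbf w}$ be the prophet's greedy allocations on $[i]$ under $g$ and under $\hat g$, with the same tie-breaking. Submodularity gives $\mathbf w'=\mathbf w^*-\hat{\mathbf w}\ge 0$ and $\sum_j w'_j\le g(\{i+1\})$ pointwise.

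Take $\mathbf q''$ from the induction hypothesis applied to $\hat g$, with interims $\tfrac12\hat W_j$. This is why the hypothesis must be stated for all such functions simultaneously, not just for the fixed $g$ as in your setup. Take $\mathbf q'$ from Proposition \ref{prop:oneConstrained} applied to the single knapsack of capacity $g(\{i+1\})$, with interims $\tfrac12 W'_j$. Feasibility in \eqref{eq: Relaxation} is then immediate, and $\mathbb E[\sum_j q'_j]=\tfrac12\sum_j \mathbb E[W'_j]\le \tfrac12 g(\{i+1\})$.
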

\begin{proof}
We will use induction to
prove our result.
Our base case is $i = 1$. Let $q_1(r_1) = 0.5W_1^*(r_1)$ for all $r_1 \in R_1$ and $z(r_1) = 0.5g(\{2\})$. Then, the first constraint of \eqref{eq: MainProblem} is satisfied since, for any $I \subseteq [i]$, 
$$
\sum_{j \in I} q_j(\mathbf r^j) + z(r_1) \leq 0.5{g(I,\mathbf r^I)} + 0.5g(\{2\}) \leq g(I \cup \{2\},\mathbf r^I),
$$
where the inequalities follow from the following facts. First, $W_1^*(r_1) \leq g(I,r_1)$ for any $r_1 \in R_1$ and any $I \ni 1$. Second, as $g$ is nondecreasing, $g(\{2\}) \leq g(I \cup \{2\},\mathbf r^I)$ and ${g(I,\mathbf r^I)} \leq g(I \cup \{2\},\mathbf r^I)$ for any $I \subseteq [i]$. It is straightforward to show that $(q_1,z)$ given above also satisfies the remaining constraints of \eqref{eq: MainProblem}.

Our induction hypothesis is as follows. There exists $i \geq 2$ such that we can find an on-line solution $\mathbf{q}$ feasible in \eqref{eq: MainProblem}, for any nonnegative, nondecreasing, and on-line submodular function that allocates $0.5W_j^*(r_j)$ in expectation to all $r_j \in R_j$ and $j \in [i]$. Next, we consider a specific on-line submodular function $g$, which is non-negative and non-decreasing, and prove that it is also possible to allocate $0.5W_{i+1}^*(r_{i+1})$ in expectation to all $r_{i+1} \in R_{i+1}$. To this end, we consider the following model, which is feasible thanks to our induction hypothesis and yields a lower bound for $h_{i+1}(0.5{\mathbf W^i}^*)$ thanks to Lemmas \ref{lem: EquivalenceResult}, \ref{lem: LowerBoundModel}, and \ref{lem: equivalentLowerBound}:
\begin{align}
    \max_{\mathbf{q}', \mathbf{q}''}\; &\mathbb{E}_{\mathbf r^i} [g(\{i+1\}) - \sum_{j \in [i]} q_j'(r^j)]  \tag{\ref{eq: Relaxation}}\\
	\text{s.t.}\; &\sum_{j \in I} \Big[q_j'(\mathbf r^j) + q_j''(\mathbf r^j) \Big] \leq g(I \cup \{i+1\},\mathbf r^I) &\forall \mathbf r^i {\in \times_{\ell \in [i]} R_\ell}, \forall I \subseteq[i], \nonumber \\
    &\sum_{j \in I} q_j''(\mathbf r^j) \leq g(I \cup \{i+1\},\mathbf r^I) - g(\{i+1\}) &\forall \mathbf r^i {\in \times_{\ell \in [i]} R_\ell}, \forall I \subseteq[i], \tag{\ref{eq: q''constraint}}\\
	&\mathbb{E}_{\mathbf r^{j-1}} \Big[q_j'(\mathbf r^j) + q_j''(\mathbf r^j) \Big] = 0.5{W_j^*(r_j)} &\forall r_j {\in R_j}, \; \forall j \in [i], \nonumber \\ 
	&{{\mathbf{q}}}', {{\mathbf{q}}}'' \geq 0. \nonumber
\end{align}

Let us define another set function as follows: $\hat g(I,\mathbf r^I) = g(I \cup \{i+1\},\mathbf r^I) -g(\{i+1\})$ for all $I \subseteq [i]$ and $\mathbf r^i \in \times_{\ell \in [i]} R_\ell$. It is easy to verify that $\hat g$ shares the properties of $g$, $i.e.$, it is nonnegative, nondecreasing, and on-line submodular. We can consider the off-line problem subject to $\hat{g}$ and retrieve expected allocation values, denoted by $\hat{W}_j(r_j)$ for all $r_j \in R_j$ and $j \in [i]$. If we compare these values to the original expected allocation values, we can show that the following must hold:
\begin{align*}
    &\hat W_j(r_j) \leq W_j^*(r_j) &\forall r_j \in R_j, \forall j \in [i],\\
    &\sum_{j \in [i]}\sum_{r_j \in R_j} f_{r_j} \hat W_j(r_j) =g( [i+1]) -g(\{i+1\}), 
\end{align*}
where the inequality follows from $\hat w_j(\mathbf{r}^i) \leq w_j^*(\mathbf{r}^i)$ for all $\mathbf{r}^i {\in \times_{\ell \in [i]} R_\ell}$, and $j \in [i]$, and the equality from the optimality of $\hat w$ (all rewards are non-negative).

Due to our induction hypothesis, there must exist a feasible on-line solution, $\hat{\mathbf{q}}$, contained in on-line $\hat g$ that allocates $0.5\hat W_j(r_j)$ in expectation to all $r_j \in R_j$ and $j \in [i]$. Then, in the formulation \eqref{eq: Relaxation}, we can use $\hat{\mathbf{q}}$ as $\mathbf{q}''$, which leaves only $0.5W_j'(r_j) = 0.5W_j^*(r_j) - 0.5\hat W_j(r_j) \geq 0$ for all $r_j \in R_j$ and $j \in [i]$ to be allocated in expectation by $\mathbf q'$. Regarding $\mathbf W'$ values, we know the following:
\begin{align}
    &W_j'(r_j) \leq {g(\{i+1\})}, &\forall r_j \in R_j, \forall j \in [i], \nonumber\\
    &\sum_{j \in [i]}\sum_{r_j \in R_j} f_{r_j} W_j'(r_j) \leq g(\{i+1\}) \label{eq: hatW_upperBound}, 
\end{align}
where both inequalities follow from $ w_j'(\mathbf{r}^i) \leq g(\{i+1\})$ for a specific choice of $\mathbf{w}'$, which we prove next. We consider both optimal $\mathbf{w}^*$ and $\hat{\mathbf{w}}$ to be constructed via Greedy algorithm under the same tie breaking rules and let $\mathbf w_j' = \mathbf w_j^* - \hat{\mathbf w}_j$. That is, for any $\mathbf{r}^i$, for some $j \in [n]$ with $r_j \in \arg \max_{\ell \in [i]} r_\ell$, we have $w_j^*(\mathbf{r}^i) = g(\{j\}, r_j)$, and $\hat w_j(\mathbf{r}^i) = g(\{j,i+1\},{r}_j) -g(\{i+1\})$, which implies
\begin{align*}
    w_j'(\mathbf{r}^i) =  g(\{i+1\}) + g(\{j\}, r_j) - g(\{j,i+1\},{r}_j) \leq g(\{i+1\}).
\end{align*}
For the case of $r_j \not \in \arg \max_{\ell \in [i]} r_\ell$, we would have $w_j^*(\mathbf{r}^i) = g(I, \mathbf r^I) - g(I \setminus \{j\}, \mathbf r^{I \setminus \{j\}})$, and $\hat w_j(\mathbf{r}^i) = g(I \cup \{i+1\},\mathbf{r}^I) - g(I \cup \{i+1\} \setminus \{j\},\mathbf{r}^{I \setminus \{j\}})$ for some $I \ni j$. Then, we would have
\begin{align*}
    w_j'(\mathbf{r}^i) &=  g(I, \mathbf r^I) - g(I \setminus \{j\}, \mathbf r^{I \setminus \{j\}}) - g(I \cup \{i+1\},\mathbf{r}^I) + g(I \cup \{i+1\} \setminus \{j\},\mathbf{r}^{I \setminus \{j\}})\\
    &\leq g(I \cup \{i+1\} \setminus \{j\},\mathbf{r}^{I \setminus \{j\}}) - g(I \setminus \{j\}, \mathbf r^{I \setminus \{j\}}) \leq g(\{i+1\}).
\end{align*}
Notice that in any case $\mathbf{w}'$ is nonnegative, and the same line of arguments would also imply $\sum_{j \in [n]} w_j'(\mathbf{r}^i) \leq g(\{i+1\})$ for any $\mathbf{r}^i$.

As $\mathbf{w}'$ always satisfies a knapsack constraint, due to Proposition \ref{prop:oneConstrained}, there exists a feasible $\mathbf q'$ that satisfies $\sum_{j \in I} q_j'(\mathbf r^j) \leq g(\{i+1\})$ for all $\mathbf r^i {\in \times_{\ell \in [i]} R_\ell}$ and allocates $0.5{\mathbf{W}'}^{i}$ in expectation to all $r_j \in R_j$ for all $j \in [i]$. 

We now show that $(\mathbf{q}',\mathbf{q}'')$ is feasible in \eqref{eq: Relaxation}:
\begin{align*}
    \sum_{j \in I} \Big[q_j'(\mathbf r^j) + q_j''(\mathbf r^j) \Big] &\leq g(\{i+1\}) + \Big[ g(I \cup \{i+1\}, \mathbf r^I) - g(\{i+1\}) \Big] \\
    &\leq  g(I \cup \{i+1\}, \mathbf r^I)  \qquad \forall \mathbf r^i {\in \times_{\ell \in [i]} R_\ell}, \forall I \subseteq [i],
\end{align*}
where the first inequality follows from the facts that $\mathbf{q}'$ is nonnegative and lies in the knapsack $g(\{i+1\})$, and $\mathbf{q}''=\hat{\mathbf{q}}$. The other constraints of \eqref{eq: Relaxation} are satisfied due to the construction of $(\mathbf{q}',\mathbf{q}'')$.

Lastly, as $(\mathbf{q}',\mathbf{q}'')$ is a feasible solution of \eqref{eq: Relaxation}, its  optimal objective value is at least:
\begin{align*}
    \mathbb{E}_{\mathbf r^i} [g(\{i+1\}) - \sum_{j \in I} q_j'(\mathbf r^j)] &= g(\{i+1\}) - \sum_{j \in [i]}\sum_{r_j \in R_j} f_{r_j} 0.5 { W_j'(r_j)} \\
    &\geq 0.5{g(\{i+1\})},
\end{align*}
where the inequality follows from \eqref{eq: hatW_upperBound}. 
\end{proof}

\subsection{Correlated Rewards}\label{section:Correlated}

In this section, we drop the assumption of independent rewards and derive the $\frac{1}{n}$-prophet inequality from \citet{hill1992survey} for a general set of constraints. This is best possible, see \citet{hill1981ratio}.\footnote{Stronger bounds can be obtained if one restricts the form that correlations can take, see, for example, \citet{ImmorlicaSinglaWaggoner2020}.}

The prophet's problem is:
\begin{align}
    \max_{\mathbf{w}} \;& \mathbb{E}_{\mathbf{r}^n}[\sum_{\mathbf{r}^n \in \times_{\ell \in [n]} R_\ell} \sum_{j \in [n]}  r_j w_j(\mathbf{r}^n)] \label{eq: correlatedProphetProblem} \tag{$\mathcal{P}_C$}\\
    \text{s.t.}\; & \sum_{j \in [n]} a_{kj}w_j(\mathbf{r}^n) \leq b_k &\forall \mathbf r^n {\in \times_{\ell \in [n]} R_\ell}, \forall k \in [K], \nonumber\\
    &\mathbf{w} \geq 0, \nonumber
\end{align}
where the constraint coefficients, $\{a_{kj}\}_{k \in [K],j \in [n]}$, and the constraint bounds, $\{b_k\}_{k \in [K]}$, are all nonnegative. Then, due to dependence, the interim allocations of optimal allocation $\mathbf{w}^*$ can be written as:
\begin{align*}
    &W_j^*(r_j) = \mathbb{E}_{\mathbf{r}^n \,|\,r_j}[w_j^*(\mathbf{r}^n)] = \sum_{\mathbf{r}^{-j}} \frac{f(r_j,\mathbf{r}^{-j})}{f(r_j)} w_j^*(r_j,\mathbf{r}^{-j}) &\forall r_j \in R_j, \forall j \in [n].
\end{align*}
The main result of this section proves that there exists a feasible on-line strategy that allocates ${W}_j^*(r_j)/n$ to each $r_j \in R_j$ for all $j \in [n]$.
 
\begin{thrm}
    If \eqref{eq: correlatedProphetProblem} is feasible, and $\{a_{kj}\}_{k \in [K],j \in [n]}$ and $\{b_k\}_{k \in [K]}$ are nonnegative, then the optimal on-line strategy achieves at least $\frac{1}{n}$ of the prophet's value.
\end{thrm}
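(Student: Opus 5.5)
The plan is to exhibit an explicit on-line policy with value at least $\frac{1}{n}$ of the optimal value of \eqref{eq: correlatedProphetProblem}. I will not go through Lemma \ref{lemma:imp} and the functions $h_{i+1}$, because with correlated rewards the conditional law of $\mathbf r^{j-1}$ given $r_j$ changes with $r_j$. The basic observation is that $\mathbf w^*$ is a vector of ex-post allocations, and for each fixed $\mathbf r^n$ the vector $\mathbf w^*(\mathbf r^n)$ lies in $P=\{x\ge 0: Ax\le b\}$. Because $A\ge 0$, every coordinate of $P$ is bounded as follows. Define
\[
M_j=\max\{x_j : x\in P\}=\min_{k:\,a_{kj}>0} b_k/a_{kj},
\]
with $M_j$ possibly $+\infty$. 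If some $M_j=+\infty$ with $r_j>0$ having positive probability, both problems are unbounded, and since the on-line problem can also push $x_j$ arbitrarily high when $r_j>0$, the inequality holds trivially. So I assume every $M_j<\infty$.

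The on-line policy I would use is: at time $j$, after observing $\mathbf r^j$, set
\[
q_j(\mathbf r^j)=\tfrac{1}{n}\,\mathbb E\bigl[w_j^*(\mathbf r^n)\mid \mathbf r^j\bigr],
\]
the conditional expectation taken under the joint (correlated) law of the rewards. This choice depends only on $\mathbf r^j$, so it is non-anticipative. Its value is
\[
\sum_j \mathbb E[r_j q_j(\mathbf r^j)]=\tfrac1n\sum_j\mathbb E[r_j w_j^*(\mathbf r^n)]
\]
by the tower property, using that $r_j$ is $\mathbf r^j$-measurable. This gives exactly $\frac1n$ of the prophet's value, and it allocates $W_j^*(r_j)/n$ in conditional expectation given $r_j$, matching the announced claim.

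The key step, and the main obstacle, is feasibility. I need $\sum_{\ell\le j} a_{k\ell}q_\ell(\mathbf r^\ell)\le b_k$ for every prefix $\mathbf r^j$ and every $k$; it suffices to check the full sum over $\ell\in[n]$, since $A\ge0$ and $q\ge0$. I will bound each term separately and avoid adding the conditional expectations, which are taken with respect to different sigma-fields. For any fixed $\mathbf r^n$ and any $k$ with $a_{k\ell}>0$, feasibility of $\mathbf w^*(\mathbf r^n)$ gives $a_{k\ell}w_\ell^*(\mathbf r^n)\le b_k$. Taking conditional expectations preserves this, so $a_{k\ell}\,\mathbb E[w_\ell^*\mid\mathbf r^\ell]\le b_k$. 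When $a_{k\ell}=0$ the term vanishes. Summing over $\ell\in[n]$ then yields
\[
\sum_{\ell\in[n]} a_{k\ell}q_\ell(\mathbf r^\ell)\le \tfrac1n\cdot n\,b_k=b_k.
\]
Nonnegativity of $q$ is immediate from $\mathbf w^*\ge0$.

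The factor $\frac1n$ is therefore exactly what pays for the loss of coordination across the different conditioning sets. Finally, I would note that the optimal on-line value is at least the value of this feasible policy, which concludes the proof.
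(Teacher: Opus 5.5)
Your proof is correct and is essentially the paper's argument. The paper uses the simpler policy $q_j(\mathbf r^j)=W_j^*(r_j)/n$, which conditions only on $r_j$ rather than on the whole prefix $\mathbf r^j$, then bounds each term by $a_{kj}W_j^*(r_j)\le b_k$ and sums over $j$ exactly as you do; your reduction to $M_j<\infty$ is unnecessary, since the feasibility step only needs an optimal $\mathbf w^*$ to exist.
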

\begin{proof}
We will prove the feasibility of the following system of inequalities.
\begin{align}
	&\sum_{j \in [n]}a_{kj}w_j(\mathbf{r}^n)  \leq b_k  &\forall \mathbf r^n {\in \times_{\ell \in [n]} R_\ell}, \forall k \in [K], \label{eq: correlatedOnlineConstraint1}\\
	&w_j(\mathbf{r}^n) = q_j(\mathbf r^j) &\forall \mathbf r^n {\in \times_{\ell \in [n]} R_\ell}, \forall j \in [n], \label{eq: correlatedOnlineConstraint2}\\
	&\mathbb{E}_{\mathbf{r}^n \,|\,r_j}[w_j(\mathbf{r}^n)] \geq \frac{1}{n}W_j^*(r_j) &\forall r_j \in R_j, \forall j \in [n], \label{eq: correlatedOnlineConstraint3}\\
	&\mathbf{w}, \mathbf{q} \geq 0. \nonumber
\end{align}

We work with the ex-post allocation variables for the off-line problem, $\mathbf w$, in (\ref{eq: correlatedOnlineConstraint1})-(\ref{eq: correlatedOnlineConstraint3}) to avoid any confusion arising from the dependence between rewards. The third constraint is the non-anticipativity ones that ensure that for any $j \in [n]$, $w_j(\mathbf r^n)$ variables have the same value for all $(r_{j+1}, \ldots, r_n)$. In other words, $\mathbf w$ agrees with some on-line ex-post allocation rule $\mathbf q$. 

Given \eqref{eq: correlatedOnlineConstraint1} we deduce the following constraints on $\mathbf{W}^n$:
\begin{align}
    a_{kj}w_j^*(\mathbf r^n) &\leq b_k & \forall \mathbf r^n {\in \times_{\ell \in [n]} R_\ell}, \forall j \in [n], \forall k \in [K], \nonumber\\
    a_{kj} W_j^*(r_j)= \mathbb{E}_{\mathbf{r}^n \,|\,r_j}[a_{kj}w_j^*(\mathbf{r}^n)] &\leq b_k &\forall r_j \in R_j, \forall j \in [n], \forall k \in [K]. \label{eq: correlated_W_bound}
\end{align}
The first inequality follows because $a_{k \ell} \geq 0$ for all $\ell \in [n]$ and $k \in [K]$, as well as $\mathbf{w}^* \geq 0$. The second inequality follows by taking the conditional expectation of \eqref{eq: correlatedOnlineConstraint1}. 

We conclude the proof by verifying that the solution $w_j'(\mathbf{r}^n) = q_j'(\mathbf{r}^j) = W_j^*(r_j)/n$ satisfies the system (\ref{eq: correlatedOnlineConstraint1}-\ref{eq: correlatedOnlineConstraint3}). The proposed solution, $(\mathbf{w}',\mathbf{q}')$, by definition, is nonnegative and satisfies the constraints \eqref{eq: correlatedOnlineConstraint2} and \eqref{eq: correlatedOnlineConstraint3}. Constraint \eqref{eq: correlatedOnlineConstraint1} is also satisfied because:
\begin{align*}
    \sum_{j \in [n]} a_{kj} w_j'(\mathbf{r}^n) = \sum_{j \in [n]} a_{kj} \frac{W_j^*(r_j)}{n} \leq \sum_{j \in [n]} \frac{b_k}{n} = b_k,
\end{align*}
where the inequality follows from \eqref{eq: correlated_W_bound}.
\end{proof}

\section{Concluding Remarks}
We proposed a linear programming approach based on interim allocation variables to systematically derive various known and new prophet inequalities. This approach can be further extended by considering non-uniform scalings of the optimal off-line interim variables which will be useful for the case of selection problems with random horizons (see \citet{GiambartolomeiMallmannTrennSaona2025}).

\bibliography{arxivbib}


\end{document}